\newtheorem{theorem}{Theorem}[section]
\newtheorem{proposition}{Proposition}[section]
\newtheorem{lemma}{Lemma}[section]
\newtheorem{corollary}{Corollary}[section]
\theoremstyle{definition}
\newtheorem{definition}{Definition}[section]
\newtheorem{example}{Example}[section]
\newtheorem{remark}{Remark}[section]
\newtheorem*{assumption*}{Assumption}
\newtheorem{notation}{Notation}[section]
\newcommand{\bigO}{\mathit{O}}
\newcommand{\smallo}{\mathit{o}}
\newcommand{\Prob}{\mathds{P}}
\newcommand{\R}{\mathbb{R}}
\newcommand{\Z}{\mathbb{Z}}
\newcommand{\N}{\mathbb{N}}
\newcommand{\F}{\mathbb{F}}
\newcommand{\Id}{\textnormal{I}}
\newcommand{\G}{\mathcal{G}}    
\newcommand{\M}{\mathfrak{M}}    
\newcommand{\Num}{\mathcal{N}}    
\newcommand{\Fix}{\textnormal{Fix}}    
\newcommand{\code}{\mathcal{C}}
\newcommand{\dist}{\textnormal{d}}
\newcommand{\weight}{\textnormal{w}}
\newcommand{\underm}{\underline{m}}
\newcommand{\overm}{\overline{m}}
\title{\textbf{The asymptotic number of equivalence classes \\ of linear codes with given dimension}}
\author{Andrea Di Giusto\thanks{A.D.G. is supported by the European Commission through MSCA-DN project ENCODE.}, and Alberto Ravagnani\thanks{A.R. is supported by the Dutch Research Council NWO through grants OCENW.KLEIN.539 and VI.Vidi.203.045, by the European Commission via the ENCODE MSCA-DN project, by the EuroTech Alliance, and by the Royal Academy of Arts and Sciences of the Netherlands.}}
\affil{Eindhoven University of Technology, the Netherlands}
\date{}
\begin{document}

\setlength{\parindent}{20pt}

\begin{sloppypar}

\maketitle

\begin{abstract}
   We investigate the asymptotic number of equivalence classes of linear codes with prescribed length and dimension. While the total number of inequivalent codes of a given length has been studied previously, the case where the dimension varies as a function of the length has not yet been considered. We derive explicit asymptotic formulas for the number of equivalence classes under three standard notions of equivalence, for a fixed alphabet size and increasing length. Our approach also yields an exact asymptotic expression for the sum of all $q$-binomial coefficients, which is of independent interest and answers an open question in this context. Finally, we establish a natural connection between these asymptotic quantities and certain discrete Gaussian distributions arising from Brownian motion, providing a probabilistic interpretation of our results.
\end{abstract}

\medskip

\section{Introduction}

Studying error-correcting codes up to equivalence is a well-established practice in coding theory. Equivalent codes share all the properties that are relevant for a particular application, and can \textit{de facto} be used interchangeably. When focusing on linear block codes endowed with the Hamming metric, there exist three main notions of equivalence, each corresponding to a group action: \textit{permutation} equivalence, \textit{monomial} equivalence (probably the most popular), and \textit{semilinear} equivalence.
For binary block codes, these three notions coincide.

A natural question in coding theory is to count the number 
of $q$-ary codes that satisfy a particular property. When such property is having given \textit{length} $n$, \textit{dimension} $k$, and \textit{minimum distance} at least $d$,
the problem is equivalent to computing the parameters of certain lattices that are notoriously difficult to analyse~\cite{dowling1971codes,bonin1993automorphism,bonin1993modular,ravagnani2022whitney}. When working modulo code equivalence, one natural question is to compute the number of equivalence classes of $q$-ary codes with given length $n$ and dimension $k$, which is yet another impossible task.
This paper addresses the asymptotic version of this problem, solving it entirely for some parameters ranges. 

The problem of counting codes up to equivalence 
has been considered before in the coding theory literature. However, all references we are aware of focus on estimating the number of equivalence classes of codes over a given alphabet and with given length, \textit{without} fixing the code dimension. This paper fills in this gap by estimating the number of equivalence classes of codes whose dimension is $k$, and where $k=k(n)$ is a function of the length $n$. One of our main results determines the exact asymptotic behaviour of the number of equivalence classes for $q$ fixed and $n$ growing. More precisely, in~\Cref{thm:asymptotic_Nk} we show that, for sufficiently well-behaved dimension functions $k(n)$, the asymptotic numbers of 
permutation, monomial, and semilinear equivalence classes are
 $$\cfrac{q^{k(n)(n-k(n))}}{K_qn!},\qquad \cfrac{q^{k(n)(n-k(n))}}{K_qn!(q-1)^{n-1}},\qquad \cfrac{q^{k(n)(n-k(n))}}{K_qhn!(q-1)^{n-1}},$$
 respectively. We then relate these quantities to
the (better studied) number of equivalence classes of codes with unrestricted dimension. Interestingly, for a suitable choice of $k(n)$ the relation can be expressed very naturally using the Gaussian $\theta_2$ and $\theta_3$ distributions, which  control the dynamics of the Brownian motion. 

Our results are not necessarily related to the theory of error-correcting codes, and include asymptotic estimates of quantities that are of interest also in other fields, such as the ${q\textnormal{-binomial}}$ coefficients.
As a byproduct of our analysis,
we also determine the exact asymptotic behaviour of the sum of all $q$-binomials for fixed $q$ and $n\rightarrow\infty$,
answering an open question raised by Wild in~\cite{wild2000asymptotic}.

Before presenting the structure of the paper, we briefly survey the contributions to the problem made so far and introduce the various players.
The study of 
the asymptotic number of monomially inequivalent binary codes was initiated
by Wild in \cite{wild2000asymptotic}.
His main statement was correct, but 
a gap in the proof was found by Lax in \cite{lax2004character}.
A correct proof was later published by Hou in \cite{hou2007asymptotic}.
The state-of-the-art reference on monomial and permutation equivalence classes of $q$-ary linear codes is~\cite{hou2005asymptotic}, where Hou shows that the number $\Num_{n}$ of monomially inequivalent $q$-ary linear codes of length $n$ satisfies
\begin{equation}\label{eq:N_nq}
    \Num_{n}\sim\frac{\sum_{j=0}^n\binom{n}{j}_q}{n!(q-1)^{n-1}} \quad \mbox{for $n$ large.}
\end{equation}
However, the paper does not address the case where the dimension is restricted to a specific function $k(n)$ of the length.
For semilinear classes and a sum-up of the three notions of equivalence, see \cite{hou2009asymptotic}, also by Hou.
The focus of this paper is the quantity $\Num_{k,n}$, counting the number of inequivalent codes of a given dimension $k=k(n)$.

\paragraph*{Outline.}
The remainder of this paper is organized as follows. In~\Cref{sec:preliminaries} we establish the necessary background and formally state the problem studied in this paper.
In~\Cref{sec:asymptotic_number_eq_classes} we show the asymptotic relation between the number of inequivalent codes and $q$-binomial coefficients; in particular, we outline the limitations needed on the function $k(n)$ describing the dimension of the equivalence classes.
\Cref{sec:asymptotics_1} is devoted to the study of the $q$-binomial coefficient \smash{$\binom{n}{k(n)}_q$} as $n$ grows, and consequently to the description of the asymptotic number of inequivalent codes of dimension $k(n)$.
We then turn to the study of the proportion between this number and the number of all equivalence classes (without restriction on the dimension) in~\Cref{sec:asymptotics_2}.
We explain the link between these numbers and the Gaussian $\theta_2$ and~$\theta_3$ distributions,
and offer an asymptotic description of the sum of all $q$-binomials.

\section{Preliminaries and problem statement}\label{sec:preliminaries}
We start by establishing the notation for this paper and by stating the problem we are interested in. In the sequel, the closed interval with extrema $a,b\in\R$ is denoted by $[a,b]$; for $b\geq1$, we let~$[b]=[1,b]$.
For $x\in\R$, $\lfloor x\rfloor$ (resp. $\lceil x\rceil$) is the greatest (resp. smallest) $z\in\Z$ such that~$z\leq x$~($z\geq x$).
Throughout the paper we follow Bachmann-Landau notation notation ($\sim$, $\bigO$, $\smallo$) for asymptotic estimates; see~\cite{de2014asymptotic}.
All asymptotics are for $n\rightarrow\infty$, unless otherwise specified.

We include the coding theory background needed to read this paper; we refer to~\cite{huffman2010fundamentals} for further details.
With $q$ we always denote a prime power, $\F_q$ is the field with $q$ elements, and we let  $\F_q^*=\F_q\setminus\{0\}$.
We often omit~$q$ from the notations: the reader can 
assume it is fixed, unless otherwise specified.

\begin{definition}
    A (\textbf{linear}) \textbf{code} is a vector subspace $\code$ of $\F_q^n$. The \textbf{dimension} of $\code$ is its dimension over $\F_q$ as a linear space.
The \textbf{Hamming weight} of a vector $x=(x_1,\ldots,x_n)\in\F_q^n$ is the number of its nonzero coordinates: $\weight(x)=|\{i\in[n]\mid x_i\neq0\}|$.
The (\textbf{Hamming}) \textbf{distance} between $x,y\in\F_q^n$ is the Hamming weight of their difference:~$\dist(x,y)=\weight(x-y)$.
\end{definition}

The above notions are central in characterizing the error correcting capabilities of a code: the \textbf{minimum distance} of a nonzero code $\code$ is
\begin{equation*}
    \dist(\code)=\min\{\dist(c_1,c_2)\mid c_1,c_2\in\code, \, c_1\neq c_2\} = \min\{\weight(c)\mid c\in\code,\, c\neq0\},
\end{equation*}
and the maximum amount of errors that a code $\code$ can correct is $\lfloor(\dist(\code)-1)/2\rfloor$.

For $0\leq k\leq n$, the \textbf{Grassmannian} $\G(k,n)$ is the set of all codes of dimension $k$ in $\F_q^n$; its cardinality is the $q$-binomial coefficient $n$-choose-$k$.
The \textbf{projective space} is the union of all Grassmannians $\G(n)=\cup_{k=0}^n\G(k,n)$; its cardinality, which is the sum of $|\G(k,n)|$ for $k$ from 0 to $n$, is denoted by $S(n)$.
In formul\ae, we have
\begin{equation}\label{eq:q_binomial_def}
    |\G(k,n)|=\binom{n}{k}_q=\prod_{j=0}^{k-1}\frac{q^{n-j}-1}{q^{k-j}-1}\quad\textnormal{and}\quad S(n)=\sum_{k=0}^n\binom{n}{k}_q.
\end{equation}

It is natural to group codes in equivalence classes and thus ask the following question: when are two codes \textit{essentially the same}?
Informally speaking, we want two equivalent codes to be able to carry the same amount of information and to have the same error-correcting capabilities.
There are three types of equivalence in coding theory: \textit{permutation}, \textit{monomial} and \textit{semilinear} equivalence.
The equivalence classes are the orbits of the action of three groups of transformations of $\F_q^n$, respectively, whose names correspond to the equivalence type they describe.
We abuse terminology and call the equivalence classes also \textit{inequivalent codes}.

The \textbf{permutation} group $\mathfrak{S}_n$ is formed by all the $n\times n$ permutation matrices, and the \textbf{monomial} group $\M_n$ is the subgroup of $\textnormal{GL}(\F_q^n)$ generated by $\mathfrak{S}_n$ and all diagonal matrices.
These two groups inherit their action on $\F_q^n$ from $\textnormal{GL}(\F_q^n)$, and since the image of a subspace via a linear transformation is a subspace of the same dimension, the action extends to $\G(n)$ and $\G(k,n)$.
The \textbf{semilinear} group $\Gamma_n$ is the semidirect product (also called \textit{unrestricted wreath product}) of the group of field automorphisms $\textnormal{Aut}(\F_q)$ of $\F_q$ and $\M_n$:
\begin{equation*}
    \Gamma_n=\textnormal{Aut}(\F_q)\ltimes\M_n=\{(\sigma,M) \mid \sigma\in\textnormal{Aut}(\F_q), \, M\in\M_n\}.
\end{equation*}
The action of an element $(M,\sigma)$ on a vector $x\in\F_q^n$ is the component-wise application of $\sigma$ to~$x$, followed by the usual action of $M$.
This action too extends naturally to any Grassmannian and to the projective space.
We will not spell out these group actions, for which we refer the reader to \cite[Sections 1.6 and 1.7]{huffman2010fundamentals} and \cite[Sections 1.4 and 1.5]{betten2006error}.
It is readily checked that, through suitable embeddings, $\mathfrak{S}_n\subseteq\M_n\subseteq\Gamma_n$ and that if $q=p^h$ with $p$ a prime, we have
\begin{equation}\label{eq:groups_cardinality}
|\mathfrak{S}_n|=n!, \quad |\M_n|=n!(q-1)^n, \quad \text{and} \quad |\Gamma_n|=hn!(q-1)^n.
\end{equation}
When $h=1$, i.e. when $\F_q$ is a prime field, $\Gamma_n=\M_n$; when $q=2$, $\M_n=\mathfrak{S}_n$, and the corresponding types of equivalence coincide.
In particular, there is only one type of equivalence for binary linear codes.
The notations $\mathfrak{S}=\mathfrak{S}_n$, $\M=\M_n$ and $\Gamma=\Gamma_n$ will be preferred when~$n$ is clear from context.

It can be checked that a semilinear transformation does not change the Hamming weight of a vector, and hence all the groups mentioned above are groups of \textit{isometries} with respect to the Hamming metric.
By the MacWilliams Extension Theorem \cite[Section 7.9]{huffman2010fundamentals}, every linear  isometry of a code can be extended to a monomial transformation of the ambient space; this remains true for semilinear isometries and semilinear transformations \cite{hou2009asymptotic}.
It follows that the permutation/monomial/semilinear equivalence classes of linear codes are actually the \textit{orbits} of the actions of the corresponding groups on $\G(n)$.
Moreover, since the transformations are dimension-preserving, this also holds when considering the actions of the groups on $\G(k,n)$.
For a group $G$ acting on a set $X$, we denote by $X/G$ the set of orbits of $G$ in $X$.
With this in mind, we define the following quantities that are the main subject of this paper's work.
\begin{notation}
    Consider the action of $\M_n$ described above. 
    We introduce the quantities
    \begin{equation*}
    \Num_n^\M=|\G(n)/\M|\quad\textnormal{and}\quad\Num_{k(n),n}^\M=|\G(k,n)/\M|,
    \end{equation*}
    denoting respectively the number of monomial equivalence classes of codes, and the number of those classes having dimension $k=k(n)$.
    For $G=\mathfrak{S}_n$ (resp. $\Gamma_n$) we define the numbers \smash{$\Num_n^{\mathfrak{S}}$ and $\Num_{k(n),n}^{\mathfrak{S}}$ (resp. $\Num_n^{\Gamma}$ and $\Num_{k(n),n}^{\Gamma}$)} analogously.
\end{notation}

To avoid confusion, note that the numbers of equivalence classes in $\G(n)$ always have one index, while the numbers of inequivalent codes in $\G(k,n)$ always have two.
This paper mainly focuses on the latter ones.

\paragraph{Problem statement.}
As mentioned in the introduction, the asymptotic behaviour of the numbers $\Num_n^\mathfrak{S}$, $\Num_n^\M$ and $\Num_n^\Gamma$ is known~\cite{hou2005asymptotic,hou2007asymptotic,hou2009asymptotic}.
This paper studies the asymptotic behaviour of \smash{$\Num_{k(n),n}^\mathfrak{S}$, $\Num_{k(n),n}^\M$ and $\Num_{k(n),n}^\Gamma$} for fixed $q$ and $n\rightarrow\infty$.
We also consider the fraction that each of these numbers represents of the respective total number of equivalence classes. For example, in the monomial case we consider the quantity $\Num_{k(n),n}^\M/\Num_n^\M$.
This problem is of its own interest for coding theorists, but it also links to combinatorics and probability theory.

\section{Equivalence classes of codes with given dimension}\label{sec:asymptotic_number_eq_classes}

In this section we show how the number of inequivalent linear codes of dimension $k=k(n)$ is asymptotically related to the $q$-binomial coefficient \smash{$\binom{n}{k(n)}_q$} as $n\rightarrow\infty$. We will understand this 
for the class of functions $k: \N \to \N$ that satisfy a particular property, which we denote by $(\star)$ and define in~\Cref{prop:asymptotic_monomial_classes}.

The general strategy that we follow to tackle the problem of counting inequivalent codes with a given dimension relies on a standard group-theoretic argument: we recall it quickly to establish notation, and for completeness.
Let $G$ be a group acting on a set~$X$, where $g.x$ denotes the action of $g\in G$ on $x\in X$.
For any $g\in G$, we denote by~${\Fix(g,X)=\{x\in X:g.x=x\}}$ the set of $g$-invariant elements of $X$.
The kernel of the action is the set~$\Delta(G,X)=\{g\in G\mid\forall x\in X\;g.x=x\}$. 
By the Burnside Lemma then we have
\begin{equation}\label{eq:Burnside_lemma}
    |X/G|=\frac{1}{|G|}\sum_{g\in G}|\Fix(g,X)|=\frac{|\Delta(G,X)||X|}{|G|}+\sum_{g\in G\setminus\Delta(G,X)}|\Fix(G,X)|,
\end{equation}
and our results will rely on estimating the second sum on the RHS appropriately.
We are especially interested in applying this result to the groups $\mathfrak{S}_n$, $\M_n$, and $\Gamma_n$ acting on~$\G(k(n),n)$, similarly to what done in \cite{hou2005asymptotic,hou2009asymptotic} with their action on $\G(n)$.

We start by looking at monomial equivalence classes, and then consider permutation and semilinear classes as well.

\subsection{Monomial equivalence classes}
The starting point of our study is an application of the Burnside Lemma, which will be used repeatedly throughout the paper; see~\Cref{eq:Burnside_lemma} for the statement.
\begin{proposition}\label{prop:asymptotic_monomial_classes}
    There exist positive constants $A$, $\varepsilon$ such that, if $k: \N \to \N$ satisfies
    \begin{equation}\label{eq:star}
        \lim_{n\rightarrow\infty}\frac{1}{4}n^2-\varepsilon n+A\sqrt{n}-k(n)(n-k(n))=-\infty,\tag{$\star$}
    \end{equation}
    then 
    \begin{equation*}
        \Num_{k(n),n}^\M\sim \frac{\binom{n}{k(n)}_q}{n!(q-1)^{n-1}}, \quad\cfrac{\Num_{k(n),n}^\M}{\Num_n^\M}\sim p(k(n),n):=\cfrac{\binom{n}{k(n)}_q}{S(n)} \quad \mbox{as $n \to \infty$}.
    \end{equation*}
\end{proposition}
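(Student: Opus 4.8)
The plan is to apply the Burnside Lemma \eqref{eq:Burnside_lemma} to the action of $\M_n$ on $\G(k,n)$ and to show that the contribution of the non-kernel elements is asymptotically negligible. First I would identify the kernel $\Delta(\M_n,\G(k,n))$. For $1\le k\le n-1$ (which $(\star)$ forces, since it requires $k(n-k)$ close to $\tfrac14 n^2$, hence $k$ close to $n/2$), a monomial transformation fixing every $k$-dimensional subspace must be scalar, so $\Delta(\M_n,\G(k,n))=\{\lambda\Id:\lambda\in\F_q^*\}$ has cardinality $q-1$. Together with $|\G(k,n)|=\binom{n}{k}_q$ from \eqref{eq:q_binomial_def} and $|\M_n|=n!(q-1)^n$ from \eqref{eq:groups_cardinality}, the first summand in \eqref{eq:Burnside_lemma} equals $\binom{n}{k}_q/(n!(q-1)^{n-1})$, which is exactly the claimed leading order. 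Since the error term equals $E_{k,n}/(n!(q-1)^n)$ with
\[
E_{k,n}:=\sum_{g\in\M_n\setminus\Delta(\M_n,\G(k,n))}|\Fix(g,\G(k,n))|,
\]
a direct comparison shows the whole problem reduces to proving $E_{k,n}=\smallo(\binom{n}{k}_q)$.

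To control $E_{k,n}$ I would bound each summand using the structure of $g$. Since $|\Fix(g,\G(k,n))|$ depends only on the $\textnormal{GL}(\F_q^n)$-conjugacy class of $g$, i.e. on its rational canonical form, and a monomial matrix is determined by the cycle type of its permutation part together with its diagonal entries, the $g$-invariant $k$-subspaces decompose along the (generalized) eigenspaces of $g$. Writing the eigenspace dimensions as $m_1\ge m_2\ge\cdots$, the number of invariant $k$-subspaces is a $q$-Vandermonde-type convolution $\sum_{\sum_i k_i=k}\prod_i\binom{m_i}{k_i}_q$ (with extension-field refinements for Galois-conjugate eigenvalues), which is largest when one eigenspace is as large as possible, i.e. when $g$ is as close to scalar as possible. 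The crude uniform estimate $|\Fix(g,\G(k,n))|\le|\Fix(g,\G(n))|$ then lets me pass to the dimension-free quantity $B_n:=\sum_{g\in\M_n\setminus\Delta}|\Fix(g,\G(n))|$ and reduce everything to a single bound of the form
\[
B_n\le C\,q^{\frac14 n^2-\varepsilon n+A\sqrt n}
\]
for suitable constants $C,A,\varepsilon>0$; these are exactly the constants appearing in $(\star)$.

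The hard part is establishing the bound on $B_n$, which is the technical heart of the argument and follows the strategy of Hou \cite{hou2005asymptotic}. The difficulty is that the naive product (maximal number of fixed subspaces) $\times$ (number of group elements) is far too weak: near-scalar elements have almost $\binom{n}{\lfloor n/2\rfloor}_q$ fixed subspaces, while the group already has $n!(q-1)^n=q^{\bigO(n\log n)}$ elements. I would instead partition $\M_n\setminus\Delta$ according to how far $g$ is from being scalar, for instance by $n-m_1$, the codimension of the largest eigenspace, and balance two opposing effects. Elements with a small perturbation are only polynomially many in $n$, whereas their invariant-subspace count, although close to $\binom{n}{k}_q$, is suppressed by an exponential factor $q^{-\Theta(n)}$; conversely, elements far from scalar are exponentially numerous but have exponentially fewer invariant subspaces, so the suppression $q^{-\Theta(s(n-s))}$ dominates their count $\binom{n}{s}q^{\bigO(s)}$. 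Summing these contributions yields the stated bound on $B_n$, with the $\varepsilon n$ term recording the exponential gap produced by the dominant near-scalar classes and the $A\sqrt n$ term absorbing the sub-exponential (polynomial and theta-constant) corrections.

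Finally I would combine the pieces. Using that each factor of $\binom{n}{k}_q$ in \eqref{eq:q_binomial_def} is at least $q^{n-k}$, one has $\binom{n}{k(n)}_q\ge q^{k(n)(n-k(n))}$. Under $(\star)$, the quantity $\tfrac14 n^2-\varepsilon n+A\sqrt n-k(n)(n-k(n))$ tends to $-\infty$, whence
\[
\binom{n}{k(n)}_q\ge q^{k(n)(n-k(n))}\gg q^{\frac14 n^2-\varepsilon n+A\sqrt n}\ge C^{-1}B_n\ge C^{-1}E_{k,n},
\]
so $E_{k,n}=\smallo(\binom{n}{k(n)}_q)$ and the first asymptotic follows. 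The statement about the ratio is then immediate: dividing the first asymptotic by the known estimate $\Num_n^\M\sim S(n)/(n!(q-1)^{n-1})$ recorded in \eqref{eq:N_nq} yields $\Num_{k(n),n}^\M/\Num_n^\M\sim\binom{n}{k(n)}_q/S(n)=p(k(n),n)$.
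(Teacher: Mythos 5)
Your proposal is correct and follows essentially the same route as the paper: Burnside's Lemma applied to the $\M_n$-action on $\G(k,n)$, identification of the kernel as the scalars, the reduction $|\Fix(M,\G(k,n))|\le|\Fix(M,\G(n))|$ combined with $\binom{n}{k}_q\ge q^{k(n-k)}$, the bound $\sum_{M\in\M\setminus\Delta}|\Fix(M,\G(n))|\in\bigO\left(q^{\frac14 n^2-\varepsilon n+A\sqrt n}\right)$ from Hou (which the paper cites directly rather than re-deriving, as you sketch), and finally division by Hou's asymptotic $\Num_n^\M\sim S(n)/(n!(q-1)^{n-1})$ for the ratio statement.
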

\begin{proof}
    For ease of notation, we write $k=k(n)$ and $\Delta=\Delta(\M,\G(k,n))=\{a\Id_n:a\in\F_q^*\}$, where $\Id_n$ denotes the $n\times n$ identity matrix. 
    Then $|\Delta|=q-1$ and the Burnside Lemma implies
    \begin{align*}
        \Num_{k,n}^\M&=\cfrac{(q-1)|\G(k,n)|+\sum_{M\in\M\setminus\Delta}|\Fix(M,\G(k,n))|}{|\M|}\\
        &=\cfrac{(q-1)\binom{n}{k}_q+\sum_{M\in\M\setminus\Delta}|\Fix(M,\G(k,n))|}{n!(q-1)^n}.
    \end{align*}
    Since $\G(k,n)\subseteq\G(n)$ we have
    \begin{equation*}    \cfrac{\sum_{M\in\M\setminus\Delta}|\Fix(M,\G(k,n))|}{\binom{n}{k}_q}\leq\cfrac{\sum_{M\in\M\setminus\Delta}|\Fix(M,\G(n))|}{q^{k(n-k)}}.
    \end{equation*}
    From \cite[Corollary 2.4 and Equation 4.1]{hou2005asymptotic} we know that there exist positive constants $A$, $\varepsilon$ such that, for $n$ large enough,    \begin{equation}\label{eq:sum_Fix_upper_bound}
\sum_{M\in\M\setminus\Delta}|\Fix(M,\G(n))|\in\bigO \left(q^{\frac{1}{4}n^2-\varepsilon n+A\sqrt{n}}\right).
    \end{equation}
If $\frac{1}{4}n^2-\varepsilon n+A\sqrt{n} -k(n-k)\to -\infty$, by \Cref{eq:sum_Fix_upper_bound} we have
    \begin{equation*}
        \cfrac{\sum_{M\in\M\setminus\Delta}|\Fix(M,\G(n))|}{q^{k(n-k)}}\in\smallo(1) \quad \mbox{as $n\rightarrow\infty$.}
    \end{equation*}
    It follows that
    \begin{equation*}
        \Num_{k,n}^\M=\cfrac{\binom{n}{k}_q(q-1+\smallo(1))}{n!(q-1)^n}\sim\cfrac{\binom{n}{k}_q}{n!(q-1)^{n-1}} \quad \mbox{as $n \to \infty$.}
    \end{equation*}
       The second asymptotic estimate in the statement  follows from~\cite[Theorem 4.1]{hou2005asymptotic}, since    \begin{equation}\label{eq:Hou_asymptotic_N}
        \Num_n^\M\sim\cfrac{S(n)}{n!(q-1)^{n-1}} \quad \mbox{as $n\rightarrow\infty$.} 
    \end{equation}
    This concludes the proof.
\end{proof}

\begin{notation}
  Throughout the remainder of the paper, 
  whenever condition $(\star)$ is mentioned, we implicitly mean that~$A$ and~$\varepsilon$ are the constants specified in the proof of the previous theorem.
\end{notation}

Note that condition $(\star)$ excludes many classes of functions $k(n)$ that one could find interesting: for example, $k(n)=\alpha\in\N$ or $k(n)=\lambda n$, $0<\lambda<1/2$, do not satisfy $(\star)$.
The following example outlines a class of functions $k(n)$ that \textit{do} satisfy $(\star)$, and that will be of interest in the forthcoming analysis.
\begin{example}\label{ex:k(n)=n/2-r}
Let $r$ be a constant and $k(n)=\lfloor n/2\rfloor-r$.
Then $k(n)$ satisfies $(\star)$, since
\begin{equation*}
    k(n)(n-k(n))=(\lfloor n/2\rfloor-r)(\lceil n/2\rceil+r)\geq\frac{1}{4}n^2-\frac{1}{4}-r^2-r.
\end{equation*}
A similar reasoning shows that $k(n)=\lceil n/2\rceil+r$ satisfies $(\star)$ as well. These
two functions will play a symmetric role in our analysis of the proportion of inequivalent codes having a given dimension (\Cref{sec:asymptotics_2}).
\end{example}
A more general class of functions $k(n)$ satisfying $(\star)$ is given in the following example.
\begin{example}
Generalising the previous example, a large class of functions $k(n)$ satisfying~$(\star)$ can be found as follows.
Write $k(n)=\lfloor n/2\rfloor-\ell(n)$, with $\ell(n)$ a positive function.
If $\ell(n)\in\smallo((\varepsilon n-A\sqrt{n})^{1/2})$, then $k(n)$ satisfies $(\star)$, as we have
\begin{equation*}
    k(n)(n-k(n))\geq\frac{1}{4}n^2-\left(\ell(n)+ \frac{1}{2} \right)^2.
\end{equation*}
For example, for $0\leq\alpha<1/2$, $\beta\in\R$, and $\ell(n)=n^\alpha\log n^\beta$, $k(n)$ satisfies $(\star)$.
\end{example}

The quantity $p(k,n)$, introduced in~\Cref{prop:asymptotic_monomial_classes} plays a role
also in the estimates for permutation and semilinear equivalence classes, as we see in the next subsection.

\subsection{Permutation and semilinear classes}
The analogue of \Cref{prop:asymptotic_monomial_classes} for permutation and semilinear equivalence classes of codes is the following result.
\begin{proposition}\label{prop:asymptotic_permutation_semilinear_classes}
    Let $q=p^h$, $p$ a prime, and assume that $k: \N \to \N$ satisfies $(\star)$. Then for the permutation equivalence classes of codes we have
    \begin{equation*}
        \Num_{k(n),n}^\mathfrak{S}\sim \frac{\binom{n}{k(n)}_q}{n!}, \quad \cfrac{\Num_{k(n),n}^\mathfrak{S}}{\Num_n^\mathfrak{S}}\sim p(k(n),n) \quad \mbox{as $n \to \infty$}.
    \end{equation*}   
    For semilinear classes we have
    \begin{equation*}
        \Num_{k(n),n}^\Gamma\sim \frac{\binom{n}{k(n)}_q}{hn!(q-1)^{n-1}},\quad \cfrac{\Num_{k(n),n}^\Gamma}{\Num_n^\Gamma}\sim p(k(n),n) \quad \mbox{as $n \to \infty$}.     
    \end{equation*}
\end{proposition}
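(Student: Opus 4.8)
The plan is to rerun the Burnside-Lemma argument of \Cref{prop:asymptotic_monomial_classes}, once for $\mathfrak{S}_n$ and once for $\Gamma_n$, reducing the fixed-point bookkeeping to the monomial estimate \Cref{eq:sum_Fix_upper_bound} wherever possible. In each case I need exactly two ingredients: the size of the kernel $\Delta(G,\G(k,n))$, and a bound $\sum_{g\in G\setminus\Delta}|\Fix(g,\G(k,n))|\in\bigO(q^{\frac14 n^2-\varepsilon n+A\sqrt n})$ with the \emph{same} exponent as in the monomial case. Once these are in place, condition $(\star)$ forces the fixed-point sum to be $\smallo(\binom{n}{k}_q)$, and the stated asymptotics follow just as before, now using $|\mathfrak{S}_n|=n!$ and $|\Gamma_n|=hn!(q-1)^n$ from \Cref{eq:groups_cardinality}. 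Throughout I may assume $0<k(n)<n$ for $n$ large, which holds under $(\star)$.

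For the permutation case I would first check that $\Delta(\mathfrak{S}_n,\G(k,n))=\{\Id_n\}$ for $0<k<n$: a non-identity permutation sends some coordinate $i$ to $j\neq i$, and a coordinate $k$-subspace containing $e_i$ but not $e_j$ (which exists because $k\le n-1$) is then not fixed, so the kernel has size $1$. Next, the standard embedding maps $\mathfrak{S}_n\setminus\{\Id_n\}$ into $\M_n\setminus\Delta$, since a non-identity permutation matrix is never a nonzero scalar matrix; combined with $\Fix(P,\G(k,n))\subseteq\Fix(P,\G(n))$, the permutation fixed-point sum is dominated by the monomial sum already controlled in \Cref{eq:sum_Fix_upper_bound}. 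Burnside then gives $\Num_{k,n}^{\mathfrak{S}}=(\binom{n}{k}_q+\smallo(\binom{n}{k}_q))/n!\sim\binom{n}{k}_q/n!$, and the ratio statement follows by dividing through by Hou's estimate $\Num_n^{\mathfrak{S}}\sim S(n)/n!$ from \cite{hou2005asymptotic}.

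For the semilinear case I would first argue that $\Delta(\Gamma_n,\G(k,n))=\{(\mathrm{id},a\Id_n):a\in\F_q^*\}$, so $|\Delta|=q-1$: if $(\sigma,M)$ fixes every $k$-subspace then $M$ must be scalar (by the monomial kernel computation), after which the automorphism part alone must fix every $k$-subspace, and any nontrivial $\sigma$ fails to do so on a subspace spanned by vectors with entries outside the prime field $\F_p$. The remaining task is the bound $\sum_{(\sigma,M)\in\Gamma\setminus\Delta}|\Fix((\sigma,M),\G(n))|\in\bigO(q^{\frac14 n^2-\varepsilon n+A\sqrt n})$, which I would import from Hou's semilinear analysis in \cite{hou2009asymptotic}; together with $\Fix((\sigma,M),\G(k,n))\subseteq\Fix((\sigma,M),\G(n))$ and $(\star)$, Burnside yields $\Num_{k,n}^{\Gamma}\sim\binom{n}{k}_q/(hn!(q-1)^{n-1})$, and dividing by $\Num_n^{\Gamma}\sim S(n)/(hn!(q-1)^{n-1})$ recovers the ratio $p(k(n),n)$.

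The main obstacle I anticipate is precisely this semilinear fixed-point bound: unlike the permutation case it does \emph{not} reduce to \Cref{eq:sum_Fix_upper_bound}, because the field automorphisms genuinely enlarge the group and can increase the number of subspaces fixed by a given transformation. The crucial point to verify is that Hou's semilinear estimate carries the same quadratic-minus-linear exponent $\frac14 n^2-\varepsilon n+A\sqrt n$ as the monomial one — so that summing over the $h$ automorphisms affects only the implied constant — and, more delicately, that the constants $A,\varepsilon$ fixed in \Cref{prop:asymptotic_monomial_classes} can be chosen uniformly across all three groups, keeping $(\star)$ the correct common hypothesis.
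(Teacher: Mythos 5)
Your proposal is correct, and for the permutation case it coincides with the paper's proof: trivial kernel, the inclusion $\mathfrak{S}\setminus\{\Id_n\}\subseteq\M\setminus\Delta$ so that the fixed-point sum is dominated by the monomial one in \Cref{eq:sum_Fix_upper_bound}, then Burnside and division by Hou's estimate $\Num_n^\mathfrak{S}\sim S(n)/n!$. In the semilinear case your bookkeeping differs slightly from the paper's: you run Burnside over all of $\Gamma$ with kernel $\{(\mathrm{id},a\Id_n):a\in\F_q^*\}$ of size $q-1$, whereas the paper splits the Burnside sum as $\frac{1}{|\Gamma|}\sum_{M\in\M}|\Fix(M,\G(k,n))|+\frac{1}{|\Gamma|}\sum_{\gamma\in\Gamma\setminus\M}|\Fix(\gamma,\G(k,n))|$, recognizes the first term as $\frac{1}{h}\Num_{k,n}^\M$, and thereby recycles \Cref{prop:asymptotic_monomial_classes} as a black box without ever computing $\Delta(\Gamma,\G(k,n))$; the two computations are equivalent, since your sum over $\Gamma\setminus\Delta$ is exactly the sum over $\M\setminus\Delta$ plus the sum over $\Gamma\setminus\M$. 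The obstacle you flag at the end dissolves: the paper imports from \cite[Section 2]{hou2009asymptotic} the bound $\sum_{\gamma\in\Gamma\setminus\M}|\Fix(\gamma,\G(n))|\in\bigO\bigl(q^{\frac{1}{4}n^2-\varepsilon' n^2}\bigr)$ (see \Cref{eq:sum_Fix_semilin_upper_bound}), whose savings are \emph{quadratic} in $n$, strictly stronger than the linear savings of \Cref{eq:sum_Fix_upper_bound}. Hence the genuinely semilinear transformations contribute negligibly, your total over $\Gamma\setminus\Delta$ is still $\bigO\bigl(q^{\frac{1}{4}n^2-\varepsilon n+A\sqrt{n}}\bigr)$ with the \emph{same} constants $A,\varepsilon$ fixed in \Cref{prop:asymptotic_monomial_classes}, and $(\star)$ is indeed the correct common hypothesis for all three groups, exactly as you hoped.
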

\begin{proof}
    As in the proof of \Cref{prop:asymptotic_monomial_classes}, we let $k=k(n)$ and $\Delta=\Delta(\mathfrak{S},\G(k,n))$ for ease of notation. Regarding permutation classes, by the Burnside Lemma we have (notice that in this case $\Delta=\{\Id_n\}$)
    \begin{align*}
\Num_{k,n}^\mathfrak{S}&=\cfrac{|\G(k,n)|+\sum_{P\in\mathfrak{S}\setminus\Delta}|\Fix(P,\G(k,n))|}{|\mathfrak{S}|}=\cfrac{\binom{n}{k}_q+\sum_{P\in\mathfrak{S}\setminus\Delta}|\Fix(P,\G(k,n))|}{n!},
    \end{align*}
    and since $\mathfrak{S}\subseteq\M$ and $k$ satisfies $(\star)$, we have
    \begin{equation*}
        \cfrac{\sum_{P\in\mathfrak{S}\setminus\Delta}|\Fix(P,\G(k,n))|}{\binom{n}{k}_q}\leq\cfrac{\sum_{M\in\M\setminus\Delta}|\Fix(M,\G(k,n))|}{\binom{n}{k}_q}\in\smallo(1).
    \end{equation*}
    The rest of the proof is as in \Cref{prop:asymptotic_monomial_classes}, replacing the asymptotic estimate for~$\Num_n^\M$ of \Cref{eq:Hou_asymptotic_N} with the analogue result for $\Num_n^\mathfrak{S}$~\cite[Theorem 5.1]{hou2005asymptotic}, which tells us that    \begin{equation}\label{eq:Hou_asymptotic_N^S}
\Num_n^\mathfrak{S}\sim\frac{S(n)}{n!} \quad \mbox{ as $n\rightarrow\infty$}.
    \end{equation}

    Concerning semilinear classes, we apply the Brunside Lemma in a slightly different fashion. By letting $\Delta=\Delta(\Gamma,\G(k,n))$, we have
    \begin{align*}
        \Num_{k,n}^\Gamma&=\frac{1}{|\Gamma|}\sum_{M\in\M}|\Fix(M,\G(k,n))|+\frac{1}{|\Gamma|}\sum_{\gamma\in\Gamma\setminus\M}|\Fix(\gamma,\G(k,n))|\\
        &=\frac{1}{h}\Num_{k,n}^\M+\frac{1}{|\Gamma|}\sum_{\gamma\in\Gamma\setminus\M}|\Fix(\gamma,\G(k,n))|.
    \end{align*}
    We estimate the second summand on the RHS using the results of \cite[Section 2]{hou2009asymptotic}, from which it follows that there exists a positive constant $\varepsilon'$ with the property that, for $n$ large enough,  \begin{equation}\label{eq:sum_Fix_semilin_upper_bound}
        \sum_{\gamma\in\Gamma\setminus\M}|\Fix(\gamma,\G(n))|\in\bigO\left(q^{\frac{1}{4}n^2-\varepsilon' n^2}\right).
    \end{equation}
    Therefore,
    \begin{equation*}
        \cfrac{\sum_{\gamma\in\Gamma\setminus\M}|\Fix(\gamma,\G(k,n))|}{\binom{n}{k}_q}\leq\cfrac{\sum_{\gamma\in\Gamma\setminus\M}|\Fix(\gamma,\G(n))|}{q^{k(n-k)}}\in\bigO\left(q^{\frac{1}{4}n^2-\varepsilon' n^2-k(n-k)}\right).
    \end{equation*}
    Notice that since $k$ satisfies $(\star)$, we also have
    \begin{equation*}
        \lim_{n\rightarrow\infty}\frac{1}{4}n^2-\varepsilon' n^2-k(n-k)=-\infty,
    \end{equation*}
    hence
    \begin{align*}
        \Num_k^\Gamma\frac{hn!(q-1)^{n-1}}{\binom{n}{k}_q}&=\frac{1}{h}\Num_k^\M\frac{hn!(q-1)^{n-1}}{\binom{n}{k}_q}+\frac{hn!(q-1)^{n-1}}{|\Gamma|}\cfrac{\sum_{\gamma\in\Gamma\setminus\M}|\Fix(\gamma,\G(k,n))|}{\binom{n}{k}_q}\\
        &\sim 1+o(1)\sim1.
    \end{align*}
    The statement about the fraction $\Num_k^\Gamma/\Num_n^\Gamma$ follows from \cite[Equation 1.4]{hou2009asymptotic}, which tells us that     \begin{equation}\label{eq:Hou_asymptotic_N^Gamma}
        \Num_n^{\Gamma}\sim\cfrac{1}{h}\Num_n^\M \sim \cfrac{S(n)}{hn!(q-1)^{n-1}} \quad \mbox{as $n\rightarrow\infty$},
    \end{equation}
concluding the proof.
\end{proof}
\begin{remark}
   By comparing \Cref{prop:asymptotic_monomial_classes} and \Cref{prop:asymptotic_permutation_semilinear_classes} one sees that the three quantities
   $$\Num_{k(n),n}^\mathfrak{S}, \quad \Num_{k(n),n}^\M, \quad \Num_{k(n),n}^\Gamma$$ have different asymptotic behaviours, as expected from the definitions.
    Surprisingly instead, the fractions 
    \begin{equation*}
\cfrac{\Num_{k(n),n}^\mathfrak{S}}{\Num_n^\mathfrak{S}},\quad\cfrac{\Num_{k(n),n}^\M}{\Num_n^\M},\quad\cfrac{\Num_{k(n),n}^\Gamma}{\Num_n^\Gamma}
    \end{equation*}
    are all asymptotically equivalent to $p(k(n),n)$ for $n$ large.
    As we will see in \Cref{sec:asymptotics_2}, this enables a global asymptotic description of the proportion of inequivalent codes with a given dimension, regardless of the chosen notion of equivalence.
\end{remark}

\section{Asymptotics of the $q$-binomial and of  $\Num_{k(n),n}$}\label{sec:asymptotics_1}
This section computes the asymptotic number of equivalence classes of codes of dimension $k=k(n)$, where $k(n)$ satisfies property $(\star)$ and $n \to \infty$; see~\Cref{thm:asymptotic_Nk} below. To achieve so,
we start by studying the $q$-binomial coefficient $n$-choose-$k(n)$ and its asymptotic behaviour as $n$ grows.
Upper and lower bounds for the $q$-binomial coefficients are known: for every $q$ and $0\leq k\leq n$ we have
\begin{equation}\label{eq:q_binomial_bounds}
    q^{k(n-k)}\leq\binom{n}{k}_q\leq\frac{1}{K_q}q^{k(n-k)},
\end{equation}
where $K_q=\prod_{j=1}^{\infty}(1-q^{-j})$ is a finite constant depending only on $q$.
The lower bound is easy to see, while for the upper bound we refer to~\cite{gadouleau2008decoder}.
Recall that $K_q$ represents the fraction of $n\times n$ matrices over~$\F_q$ that are invertible as $n\rightarrow\infty$, and that ${K_q=\phi(q^{-1})}$, with $\phi$ the  \textit{Euler phi} function.
The bounds in \Cref{eq:q_binomial_bounds} already tell us that \smash{$\binom{n}{k}_q\in\bigO \left(q^{k(n-k)}\right)$}.

In the sequel, we are interested in determining a function $f_q: \N \to \R$ with the property that
\begin{equation}\label{eq:q-bin_asymptotic_f}
    \binom{n}{k(n)}_q\sim f_q(k(n))q^{k(n-k)} \quad \mbox{as $n \to \infty$}.
\end{equation}
Note that by this we do \textit{not} mean that the functions in \Cref{eq:q-bin_asymptotic_f} converge.
We begin by evaluating the ratio between two $q$-binomial coefficients.
For every $n$ and $q$, we refer to the $q$-binomial coefficient \smash{$\binom{n}{\lfloor n/2\rfloor}_q=\binom{n}{\lceil n/2\rceil}_q$} as the \textbf{central} $q$-binomial.
The following two lemmata are cornerstones of this paper.
\begin{lemma}\label{lem:asymptotic_q_binom_q_binom}
    We have
    \begin{equation*}
        \cfrac{\binom{n}{k}_q}{\binom{n}{\lfloor n/2\rfloor}_q}\sim \cfrac{K_q}{K_q(k)}\, q^{-(\lfloor n/2\rfloor-k)(\lceil n/2\rceil-k)} \quad \mbox{as $n \to \infty$},
    \end{equation*}
    where $K_q(k)=\prod_{j=1}^{k}(1-q^{-j})$ is the truncation of the product defining $K_q$ to $k$ terms.
    In particular,
    \begin{equation*}
        \cfrac{K_q}{K_q(k)}\sim\begin{cases}
            1&\text{if }\lim_{n\rightarrow\infty}k(n)=+\infty,\\
            \beta&\text{if }\lim_{n\rightarrow\infty}k(n)=\alpha <+\infty,\\
        \end{cases}
    \end{equation*}
    where $\alpha$ and $\beta$ are constants and $\beta<1$.
\end{lemma}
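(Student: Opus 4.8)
The plan is to compute the ratio of the two $q$-binomials directly from the product formula in~\Cref{eq:q_binomial_def} and then extract the leading exponential factor together with the correction coming from the finite products $K_q(k)$. First I would write both $\binom{n}{k}_q$ and the central binomial $\binom{n}{\lfloor n/2\rfloor}_q$ using the identity $\binom{n}{k}_q=\prod_{j=0}^{k-1}\frac{q^{n-j}-1}{q^{k-j}-1}$, and factor out the dominant powers of $q$ from numerator and denominator. A cleaner route is to use the standard factorization $\binom{n}{k}_q = q^{k(n-k)}\,\frac{(q^{-1};q^{-1})_n}{(q^{-1};q^{-1})_k\,(q^{-1};q^{-1})_{n-k}}$ (or equivalently to track the products $\prod_{j=1}^{m}(1-q^{-j})$), where the truncated products $K_q(k)$, $K_q(n-k)$, $K_q(n)$ appear naturally. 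Taking the quotient of $\binom{n}{k}_q$ by $\binom{n}{\lfloor n/2\rfloor}_q$, the exponential prefactors combine to give $q^{\,k(n-k)-\lfloor n/2\rfloor\lceil n/2\rceil}$, and an elementary identity shows $k(n-k)-\lfloor n/2\rfloor\lceil n/2\rceil = -(\lfloor n/2\rfloor - k)(\lceil n/2\rceil - k)$, which is exactly the exponent in the statement.

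Next I would handle the ratio of the product-terms. After cancellation, the combinatorial factor equals $\frac{K_q(\lfloor n/2\rfloor)\,K_q(\lceil n/2\rceil)}{K_q(k)\,K_q(n-k)}$. The key observation is that each $K_q(m)\to K_q$ as $m\to\infty$, and the convergence is geometric: $K_q(m)=K_q\prod_{j>m}(1-q^{-j})^{-1}=K_q\,(1+\bigO(q^{-m}))$. Under condition $(\star)$ the relevant indices $\lfloor n/2\rfloor$, $\lceil n/2\rceil$, and $n-k$ all tend to infinity (since $(\star)$ forces $k$ to be close enough to $n/2$ that $n-k\to\infty$), so three of the four truncated products converge to $K_q$ with error $o(1)$, leaving exactly $K_q(\lfloor n/2\rfloor)K_q(\lceil n/2\rceil)/K_q(n-k)\sim K_q$. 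This yields the asymptotic equivalence $\frac{\binom{n}{k}_q}{\binom{n}{\lfloor n/2\rfloor}_q}\sim \frac{K_q}{K_q(k)}\,q^{-(\lfloor n/2\rfloor - k)(\lceil n/2\rceil - k)}$, with the single surviving $K_q(k)$ in the denominator precisely because $k$ itself need not tend to infinity.

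For the second, "in particular" part, I would analyze $K_q/K_q(k)$ according to the behaviour of $k(n)$. If $k(n)\to\infty$ then $K_q(k)\to K_q$ and the ratio tends to $1$. If instead $k(n)\to\alpha<\infty$ (necessarily $k$ is eventually the constant $\alpha\in\N$), then $K_q(k)\to K_q(\alpha)=\prod_{j=1}^{\alpha}(1-q^{-j})$, a fixed constant, so $K_q/K_q(k)\to\beta:=K_q/K_q(\alpha)$; since $K_q=K_q(\alpha)\prod_{j>\alpha}(1-q^{-j})$ and every factor $(1-q^{-j})$ lies in $(0,1)$, we get $\beta<1$, as claimed.

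The main obstacle I anticipate is purely bookkeeping rather than conceptual: keeping the four truncated products straight through the cancellation, and verifying that under $(\star)$ the index $n-k$ (not just $\lfloor n/2\rfloor$) genuinely diverges, so that $K_q(n-k)\to K_q$. One must check that $(\star)$ rules out $k$ staying within a bounded distance of $n$; indeed $(\star)$ requires $k(n)(n-k(n))$ to grow like $\tfrac14 n^2$, which forces $k$ close to $n/2$ and hence both $k$-bounded-away-from-$0$-and-$n$ behaviour, guaranteeing $n-k\to\infty$. The remaining estimate that the tail products $\prod_{j>m}(1-q^{-j})$ are $1+o(1)$ as $m\to\infty$ is standard and follows from absolute convergence of $\sum_j q^{-j}$.
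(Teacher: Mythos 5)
Your core computation is correct: the factorization $\binom{n}{k}_q=q^{k(n-k)}\,K_q(n)/\bigl(K_q(k)K_q(n-k)\bigr)$, the exponent identity $k(n-k)-\lfloor n/2\rfloor\lceil n/2\rceil=-(\lfloor n/2\rfloor-k)(\lceil n/2\rceil-k)$, and the reduction to the tail estimate $K_q(m)=K_q\bigl(1+\bigO(q^{-m})\bigr)$ are all valid, and they repackage in $q$-Pochhammer form what the paper obtains by telescoping the ratio directly and controlling the leftover product by a logarithm/dominated-convergence argument. The genuine gap is in how you justify $K_q(n-k)\to K_q$: you invoke condition $(\star)$, but the lemma neither assumes $(\star)$ nor is compatible with it in all the cases it must cover. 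Condition $(\star)$ forces $|k(n)-n/2|\leq\sqrt{\varepsilon n}$, hence $k(n)\to\infty$; under that hypothesis the factor $K_q/K_q(k)$ is asymptotically $1$, and the ``in particular'' case $\lim_{n\to\infty}k(n)=\alpha<+\infty$ with limit $\beta<1$ --- exactly the case in which the factor $K_q(k)$ carries content --- can never occur. Your write-up is therefore internally inconsistent: the main asymptotic is proved only under $(\star)$, while the second part analyzes a regime that $(\star)$ excludes. Moreover, the paper later applies this lemma to bounded $k$ (in \Cref{thm:asymptotic_pe_po}, the asymptotics of $p^{\textnormal{e}}(k,m)$ and $p^{\textnormal{o}}(k,m)$ retain the factor $K_q/K_q(k)$ for arbitrary, possibly constant, dimension), so the restricted statement you prove would not suffice for the sequel.

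The repair is small and does not use $(\star)$ at all: since $\binom{n}{k}_q=\binom{n}{n-k}_q$ and the exponent $(\lfloor n/2\rfloor-k)(\lceil n/2\rceil-k)$ is invariant under $k\mapsto n-k$, one may reduce to $k\leq\lfloor n/2\rfloor$ (this is the paper's own reduction; note that, strictly speaking, the asserted formula is then about $\min(k,n-k)$, because the factor $K_q(k)$ itself is not symmetric in $k\mapsto n-k$). In that range $n-k\geq\lceil n/2\rceil\to\infty$ holds trivially, so all three indices $\lfloor n/2\rfloor$, $\lceil n/2\rceil$, $n-k$ diverge for \emph{every} function $k(n)$, bounded or not, and your tail estimates finish the proof of both displayed claims, including $\beta=K_q/K_q(\alpha)=\prod_{j>\alpha}(1-q^{-j})<1$. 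With that one substitution your argument is complete and is essentially the paper's proof in a cleaner algebraic packaging.
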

\begin{proof}
Let $\underm=\lfloor n/2\rfloor$, $\overm=\lceil n/2\rceil$; by symmetry of the $q$-binomial, we can assume $k\leq\underm$ without loss of generality.
Indeed, we always have $\min(k,n-k)\leq\underm$ and $\binom{n}{k}_q=\binom{n}{n-k}_q$.
Therefore,
\begin{align*}
    \cfrac{\binom{n}{k}_q}{\binom{n}{\lfloor n/2\rfloor}_q}&=\cfrac{\prod_{i=0}^{k-1}q^{n-i}-1}{\prod_{j=0}^{k-1}q^{k-j}-1}\cfrac{\prod_{j=0}^{\underm-1}q^{\underm-j}-1}{\prod_{i=0}^{\underm-1}q^{n-i}-1}=\cfrac{\prod_{i=0}^{k-1}q^{n-i}-1}{\prod_{j=1}^{k}q^{j}-1}\cfrac{\prod_{j=1}^{\underm}q^{j}-1}{\prod_{i=0}^{\underm-1}q^{n-i}-1}\\
    &=\cfrac{\prod_{j=k+1}^{\underm}q^j-1}{\prod_{i=k}^{\underm-1}q^{n-i}-1}=\cfrac{\prod_{j=1}^{\underm-k}q^{k+j}-1}{\prod_{i=\overm+1}^{n-k}q^{i}-1}=\cfrac{\prod_{j=1}^{\underm-k}q^{k+j}-1}{\prod_{i=1}^{\underm-k}q^{\overm+i}-1}\\
    &=\prod_{j=1}^{\underm-k}\frac{q^k}{q^{\overm}}\frac{q^j-q^{-k}}{q^j-q^{-\overm}}=q^{-(\overm-k)(\underm-k)}\prod_{j=1}^{\underm-k}\frac{1-q^{-(k+j)}}{1-q^{-(\overm+j)}},
\end{align*}
and we are left with proving that
\begin{equation*}
    \prod_{j=1}^{\underm-k}\frac{1-q^{-(k+j)}}{1-q^{-(\overm+j)}}\sim\frac{K_q}{K_q(k)}.
\end{equation*}
We have that $\frac{K_q}{K_q(k)}=\prod_{j=1}^{\infty}(1-q^{-(k+j)})$ and
\begin{equation*}
    \cfrac{\prod_{j=1}^{\infty}1-q^{-(k+j)}}{\prod_{j=1}^{\underm-k}\frac{1-q^{-(k+j)}}{1-q^{-(\overm+j)}}}=\prod_{j=1}^{\underm-k}(1-q^{-(\overm+j)})\prod_{j=1}^{\infty}(1-q^{-(\underm+j)}).
\end{equation*}
Note that
\begin{align*}
    1\geq\prod_{j=1}^{\underm-k}(1-q^{-(\overm+j)})\prod_{j=1}^{\infty}(1-q^{-(\underm+j)})&\geq\prod_{j=1}^{\infty}(1-q^{-(\overm+j)})\prod_{j=1}^{\infty}(1-q^{-(\underm+j)})\\
    &\geq\left(\prod_{j=1}^{\infty}(1-q^{-(\underm+j)})\right)^2.
\end{align*}
Taking the logarithm of the RHS we get
\begin{equation*}
    2\log\left(\prod_{j=1}^{\infty}(1-q^{-(\underm+j)})\right)=2\sum_{j=1}^\infty\log(1-q^{-(\underm+j)}).
\end{equation*}
For every fixed value of $j$ we have $\lim_{n\rightarrow\infty}\log(1-q^{-(\underm+j)})=0$ and
\begin{equation*}
    |\log(1-q^{-(\underm+j)})|\leq\frac{q^{\underm+j}}{q^{\underm+j}-1}-1=\frac{1}{q^{\underm+j}-1}\leq q^{-j}.
\end{equation*}
Since $\sum_{j=1}^{\infty}q^{-j}=(q-1)^{-1}<\infty$, we can swap limit and sum to obtain
\begin{equation*}
    \lim_{n\rightarrow\infty}2\sum_{j=1}^\infty\log(1-q^{-(\overm+j)})=2\sum_{j=1}^\infty\lim_{n\rightarrow\infty}\log(1-q^{-(\overm+j)})=0,
\end{equation*}
which implies $\lim_{n\rightarrow\infty}\prod_{j=1}^{\infty}(1-q^{-(\overm+j)})=1$ and
\begin{equation*}
    \prod_{j=1}^{\underm-k}\frac{1-q^{-(k+j)}}{1-q^{1-(\overm+j)}}\sim\frac{K_q}{K_q(k)},
\end{equation*}
which concludes our proof.
\end{proof}
In the following result, we study the asymptotic growth of the central $q$-binomial.
We separate this result from the previous lemma because we will use it independently also in the next section of the paper.
\begin{lemma}\label{lem:asymptotic_q_binom_power_q}
    We have
    \begin{equation*}
        \cfrac{\binom{n}{\lfloor n/2\rfloor}_q}{q^{\lfloor n/2\rfloor\lceil n/2\rceil}}\sim\cfrac{1}{K_q} \quad \mbox{as $n \to \infty$}.
    \end{equation*}
\end{lemma}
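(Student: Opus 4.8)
The plan is to reduce everything to an explicit factorisation of the central $q$-binomial into a pure power of $q$ times a product of terms of the form $1-q^{-i}$, and then to control that product by comparison with the infinite product defining $K_q$. Throughout I write $\underm=\lfloor n/2\rfloor$ and $\overm=\lceil n/2\rceil$, so that $\underm+\overm=n$ and $\underm\,\overm=\lfloor n/2\rfloor\lceil n/2\rceil$.

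First I would rewrite the central $q$-binomial using the product formula in \Cref{eq:q_binomial_def}. Reindexing the numerator and denominator of $\binom{n}{\underm}_q=\prod_{j=0}^{\underm-1}\frac{q^{n-j}-1}{q^{\underm-j}-1}$ gives
\[
\binom{n}{\underm}_q=\frac{\prod_{i=\overm+1}^{n}(q^i-1)}{\prod_{i=1}^{\underm}(q^i-1)}.
\]
Factoring $q^i-1=q^i(1-q^{-i})$ out of every term separates a power of $q$, namely $q^{\,\sum_{i=\overm+1}^n i-\sum_{i=1}^{\underm}i}$, from the residual product $\prod_{i=\overm+1}^n(1-q^{-i})\big/\prod_{i=1}^{\underm}(1-q^{-i})$. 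A short arithmetic check, using $n=\underm+\overm$, shows that the exponent of $q$ equals exactly $\underm\,\overm$; this is the one place where the bookkeeping must be done carefully, but it is purely a computation with triangular numbers. Dividing by $q^{\underm\,\overm}$ then yields
\[
\frac{\binom{n}{\underm}_q}{q^{\underm\,\overm}}=\frac{\prod_{i=\overm+1}^{n}(1-q^{-i})}{\prod_{i=1}^{\underm}(1-q^{-i})}.
\]

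It remains to take the limit of the right-hand side. The denominator is $K_q(\underm)=\prod_{i=1}^{\underm}(1-q^{-i})$, which converges to $K_q$ as $\underm\to\infty$ by definition of $K_q$. For the numerator I would squeeze: since every factor lies in $(0,1)$, the finite product $\prod_{i=\overm+1}^{n}(1-q^{-i})$ is at most $1$ and at least the infinite tail $\prod_{i=\overm+1}^{\infty}(1-q^{-i})=K_q/K_q(\overm)$, which tends to $1$ because $\overm\to\infty$. Hence the numerator tends to $1$ and the whole ratio tends to $1/K_q$, which is the claim. Equivalently, one can bound $\sum_{i=\overm+1}^{n}|\log(1-q^{-i})|$ by a tail of the convergent geometric series $\sum q^{-i}$, exactly as in the proof of \Cref{lem:asymptotic_q_binom_q_binom}.

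I expect no genuine obstacle here: the result is essentially the special case of \Cref{lem:asymptotic_q_binom_q_binom} taken at the constant value $k=0$ (there $\binom{n}{0}_q=1$ and $K_q(0)=1$, so that $(\underm-k)(\overm-k)=\underm\,\overm$ reproduces the statement), and the only care needed is the exponent computation and the tail estimate, both of which are routine. I would nonetheless give the direct argument above, since it is self-contained and the lemma is used independently later.
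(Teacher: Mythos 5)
Your proof is correct and takes essentially the same route as the paper: both reduce the claim to the identity $\binom{n}{\lfloor n/2\rfloor}_q/q^{\lfloor n/2\rfloor\lceil n/2\rceil}=\prod_{i=\lceil n/2\rceil+1}^{n}(1-q^{-i})\,/\,K_q(\lfloor n/2\rfloor)$, with the denominator tending to $K_q$ and the tail product tending to $1$; your squeeze of the tail between $K_q/K_q(\lceil n/2\rceil)$ and $1$ is just a cleaner substitute for the paper's logarithmic estimate borrowed from the proof of \Cref{lem:asymptotic_q_binom_q_binom}. Your closing observation that the lemma is the $k=0$ instance of \Cref{lem:asymptotic_q_binom_q_binom} is also valid and non-circular, since that lemma is proved independently.
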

\begin{proof}
Let $\underm=\lfloor n/2\rfloor$, $\overm=\lceil n/2\rceil$. Then
\begin{align*}
    \cfrac{\binom{n}{\lfloor n/2\rfloor}_q}{q^{\lfloor n/2\rfloor\lceil n/2\rceil}}&=\cfrac{\prod_{i=0}^{\underm-1}\frac{q^{n-i}-1}{q^{\underm-i}-1}}{q^{\underm\overm}}=\prod_{i=0}^{\underm-1}\cfrac{q^{n-i}-1}{q^{\overm} (q^{\underm-i}-1)}\\
    &=\prod_{i=0}^{\underm-1}\cfrac{q^{\underm-i}-q^{-\overm}}{q^{\underm-i}-1}=\prod_{j=1}^{\underm}\cfrac{q^j-q^{-\overm}}{q^j-1},
\end{align*}
where $j=\underm-i$. It follows that
\begin{equation*}
    \cfrac{\binom{n}{\underm}_q}{K_q(\underm)q^{\underm\overm}}=\prod_{j=1}^{\underm}(1-q^{-(\overm+j)})\sim\prod_{j=1}^{\infty}(1-q^{-(\overm+j)}).
\end{equation*}
From the proof of \Cref{lem:asymptotic_q_binom_q_binom} we know that
\begin{equation*}
    \prod_{j=1}^{\infty}(1-q^{-(\overm+j)})\sim 1,
\end{equation*}
which gives the desired result.
\end{proof}
Part of the previous result can be obtained using \cite[Equation 6.2]{gruica2022common}, which implies that
\begin{equation*}
    \binom{2m}{m}_q\sim\cfrac{q^{m^2}}{K_q}.
\end{equation*}
In other words, \cite[Equation 6.2]{gruica2022common} can be used to show that the asymptotic result holds for the subsequence corresponding to even values of $n$, but not for the odd ones.

Combining the two lemmata we just proved, we obtain the following estimate for the asymptotic growth of the $q$-binomial coefficient.

\begin{corollary}\label{corol:asymptotic_q_binomial}
    We have
    \begin{equation*}
        \binom{n}{k(n)}_q\sim\cfrac{1}{K_q(k(n))}\,q^{k(n)(n-k(n))} \quad \mbox{as $n \to \infty$}.
    \end{equation*}
\end{corollary}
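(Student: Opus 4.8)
The plan is to read off the claimed estimate by chaining the two cornerstone lemmata: \Cref{lem:asymptotic_q_binom_q_binom} relates $\binom{n}{k(n)}_q$ to the central $q$-binomial, and \Cref{lem:asymptotic_q_binom_power_q} pins down the growth of the central term against a pure power of $q$. Since the relation $\sim$ is multiplicative (if $a_n\sim b_n$ and $c_n\sim d_n$ then $a_nc_n\sim b_nd_n$, each being the product of two sequences tending to $1$), I would simply multiply the two asymptotic equivalences, viewing everything as sequences in $n$ with $k=k(n)$.

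Concretely, writing $k=k(n)$, $\underm=\lfloor n/2\rfloor$, $\overm=\lceil n/2\rceil$, and reducing without loss of generality to $k\leq\underm$ via the symmetry $\binom{n}{k}_q=\binom{n}{n-k}_q$, I would first record from \Cref{lem:asymptotic_q_binom_q_binom} that
\[
    \binom{n}{k}_q\sim\frac{K_q}{K_q(k)}\,q^{-(\underm-k)(\overm-k)}\binom{n}{\underm}_q,
\]
and then substitute the estimate $\binom{n}{\underm}_q\sim q^{\underm\overm}/K_q$ furnished by \Cref{lem:asymptotic_q_binom_power_q}. The two occurrences of $K_q$ cancel, leaving the prefactor $1/K_q(k)$ together with the single power $q^{\underm\overm-(\underm-k)(\overm-k)}$.

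It then remains only to simplify the exponent. Expanding the product yields
\[
    \underm\overm-(\underm-k)(\overm-k)=k(\underm+\overm)-k^2=k(n-k),
\]
where I use $\underm+\overm=n$. This gives exactly $\binom{n}{k}_q\sim q^{k(n-k)}/K_q(k)$, which is the statement.

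I do not expect a genuine obstacle here, as the argument is a bookkeeping combination of the two lemmata; the one point demanding a little care is the symmetry reduction to $k\leq\underm$. When $k(n)>\underm$, one should apply the result to $n-k(n)$ and observe that the stated form $1/K_q(k(n))$ is still recovered, because in the regimes relevant to this paper (in particular for any $k(n)$ satisfying $(\star)$, which forces $\min(k(n),n-k(n))\to\infty$) one has $K_q(k(n))\sim K_q(n-k(n))\sim K_q$, so the two truncated products are asymptotically interchangeable.
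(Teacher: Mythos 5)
Your proposal is correct and follows essentially the same route as the paper's own proof: both chain \Cref{lem:asymptotic_q_binom_q_binom} with \Cref{lem:asymptotic_q_binom_power_q}, cancel the factor $K_q$, and simplify the exponent via $\lfloor n/2\rfloor\lceil n/2\rceil-(\lfloor n/2\rfloor-k)(\lceil n/2\rceil-k)=k(n-k)$. Your closing remark on the reduction to $k\leq\lfloor n/2\rfloor$ (and why $K_q(k(n))$ and $K_q(n-k(n))$ are asymptotically interchangeable when $\min(k(n),n-k(n))\to\infty$) is a point of care that the paper leaves implicit.
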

\begin{proof}
Write $k=k(n)$ for ease of notation.    By \Cref{lem:asymptotic_q_binom_q_binom,lem:asymptotic_q_binom_power_q} we have
    \begin{align*}
        \binom{n}{k}_q&=\cfrac{\binom{n}{k}_q}{\binom{n}{\lfloor n/2\rfloor}_q}\cfrac{\binom{n}{\lfloor n/2\rfloor}}{q^{\lfloor n/2\rfloor\lceil n/2\rceil}}q^{\lfloor n/2\rfloor\lceil n/2\rceil}\sim\cfrac{K_q}{K_q(k)K_q}q^{-(\lfloor n/2\rfloor-k)(\lceil n/2\rceil-k)}q^{\lfloor n/2\rfloor\lceil n/2\rceil}\\
        &\sim\cfrac{1}{K_q(k)}q^{k(\lfloor n/2\rfloor+\lceil n/2\rceil)-k^2}=\cfrac{1}{K_q(k)}q^{k(n-k)}. \qedhere
    \end{align*}
\end{proof}
\begin{remark}
For $k=k(n)$ the above corollary can be made more specific if $\alpha=\lim_{n\rightarrow\infty}k(n)$ exists.
If $\alpha<+\infty$, we have $K_q(k(n))\rightarrow K_q(\alpha)$, while if $\alpha=+\infty$ we have $K_q(k(n))\rightarrow K_q$.
In other words, we have $f_q(k(n))=1/K_q(\alpha)$ or $f_q(k(n))=1/K_q$ in \Cref{eq:q-bin_asymptotic_f}.
\end{remark}

The following theorem describes the asymptotic number of inequivalent codes of dimension $k = k(n)$ as $n \to \infty$.
It is one of the main results of this work and represents a contribution of fundamental nature to coding theory.

\begin{theorem}\label{thm:asymptotic_Nk}
    Let $q=p^h$, $p$ a prime, and assume $k(n)$ satisfies $(\star)$. Then for $n \to \infty$ we have
    \begin{equation}\label{eq:asymptotic_N_k_n_q}
        \Num^\mathfrak{S}_{k(n),n}\sim \cfrac{q^{k(n)(n-k(n))}}{K_qn!},\quad
        \Num^\M_{k(n),n}\sim\cfrac{q^{k(n)(n-k(n))}}{K_qn!(q-1)^{n-1}},\quad\Num^\Gamma_{k(n),n}\sim\cfrac{q^{k(n)(n-k(n))}}{K_qhn!(q-1)^{n-1}}.
    \end{equation}
\end{theorem}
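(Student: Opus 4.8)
The plan is to obtain all three formulas at once by substituting the sharp $q$-binomial estimate of \Cref{corol:asymptotic_q_binomial} into the asymptotic equivalences already proved in \Cref{prop:asymptotic_monomial_classes} and \Cref{prop:asymptotic_permutation_semilinear_classes}, namely
\[
\Num^\mathfrak{S}_{k(n),n}\sim\frac{\binom{n}{k(n)}_q}{n!},\qquad
\Num^\M_{k(n),n}\sim\frac{\binom{n}{k(n)}_q}{n!(q-1)^{n-1}},\qquad
\Num^\Gamma_{k(n),n}\sim\frac{\binom{n}{k(n)}_q}{hn!(q-1)^{n-1}}.
\]
Since $\sim$ is transitive and is preserved under dividing by the fixed factors $n!$, $n!(q-1)^{n-1}$, $hn!(q-1)^{n-1}$, the theorem follows as soon as $\binom{n}{k(n)}_q$ is rewritten in the required closed form.

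So the only real content is to improve the estimate $\binom{n}{k(n)}_q\sim q^{k(n)(n-k(n))}/K_q(k(n))$ of \Cref{corol:asymptotic_q_binomial} to $\binom{n}{k(n)}_q\sim q^{k(n)(n-k(n))}/K_q$, i.e. to replace the truncated constant $K_q(k(n))$ by $K_q$. This is precisely where hypothesis $(\star)$ is used: I claim it forces $k(n)\to\infty$. Completing the square gives the exact identity $k(n)(n-k(n))=\tfrac14 n^2-(n/2-k(n))^2$, so $(\star)$ is equivalent to $(n/2-k(n))^2-\varepsilon n+A\sqrt n\to-\infty$. In particular $(n/2-k(n))^2<\varepsilon n$ for $n$ large, so $|n/2-k(n)|=\bigO(\sqrt n)$ and therefore $k(n)\to\infty$ (and likewise $n-k(n)\to\infty$). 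Since $K_q(k)=\prod_{j=1}^{k}(1-q^{-j})\to K_q$ as $k\to\infty$, the remark following \Cref{corol:asymptotic_q_binomial} then yields $K_q(k(n))\to K_q$.

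With $k(n)\to\infty$ in hand I would finish as follows: since $K_q(k(n))/K_q\to 1$, multiplying the equivalence $\binom{n}{k(n)}_q\sim q^{k(n)(n-k(n))}/K_q(k(n))$ through by the convergent factor $K_q(k(n))/K_q$ gives
\[
\binom{n}{k(n)}_q\sim\frac{q^{k(n)(n-k(n))}}{K_q}.
\]
Substituting this single equivalence into the three displays above produces exactly the estimates in \Cref{eq:asymptotic_N_k_n_q}.

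I do not anticipate any serious obstacle, since the statement is essentially an assembly of \Cref{prop:asymptotic_monomial_classes}, \Cref{prop:asymptotic_permutation_semilinear_classes} and \Cref{corol:asymptotic_q_binomial}. The one step that genuinely must be carried out, rather than quoted, is the implication $(\star)\Rightarrow k(n)\to\infty$: it is exactly the divergence of $k(n)$ that collapses the truncation constant $K_q(k(n))$ to $K_q$. For a bounded dimension function (which violates $(\star)$) the constant would instead tend to some $K_q(\alpha)<K_q$, and the stated formulas would be off by the multiplicative factor $K_q/K_q(\alpha)$; this is the sense in which $(\star)$ is needed for the clean form of the result.
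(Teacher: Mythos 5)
Your proposal is correct and follows essentially the same route as the paper: substitute \Cref{corol:asymptotic_q_binomial} into \Cref{prop:asymptotic_monomial_classes,prop:asymptotic_permutation_semilinear_classes}, then use $(\star)\Rightarrow k(n)\to\infty$ to replace $K_q(k(n))$ by $K_q$. In fact, your completing-the-square argument $k(n)(n-k(n))=\tfrac14 n^2-\left(n/2-k(n)\right)^2$, showing $\lvert n/2-k(n)\rvert\in\bigO(\sqrt{n})$ and hence $k(n)\to\infty$, supplies a justification that the paper's proof asserts without detail.
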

\begin{proof}
    Apply~\Cref{corol:asymptotic_q_binomial} to the asymptotic results of~\Cref{prop:asymptotic_monomial_classes,prop:asymptotic_permutation_semilinear_classes} about the respective numbers of equivalence classes.
    Notice that, since $k(n)$ satisfies $(\star)$, we have $\lim_{n\rightarrow\infty}k(n)=+\infty$ and so $K_q(k(n))\sim K_q$.
\end{proof}

\begin{example}
    By \Cref{ex:k(n)=n/2-r}, we know that $k(n)=\lfloor n/2\rfloor -r$ satisfies $(\star)$. Therefore the asymptotic number of monomially inequivalent codes of dimension $k(n)$ satisfies
    \begin{equation*}
        \Num^\M_{k(n),n}\sim\cfrac{q^{\lfloor n/2\rfloor\lceil n/2\rceil -r^2}}{K_qn!(q-1)^{n-1}} \quad \mbox{as $n \to \infty$}.
    \end{equation*}
    By duality, this number should be equal to  the one we obtain for $k(n)=\lceil n/2\rceil+r$. It can be indeed checked that plugging this function into \Cref{eq:asymptotic_N_k_n_q} gives the same formulas.
\end{example}

\section{Asymptotics of $S(n)$ and of $\Num_{k(n),n}/\Num_n$}\label{sec:asymptotics_2}
We now turn to comparing the number of equivalence classes of codes with given dimension with the total number of equivalence classes of codes of \textit{any} dimension, for sufficiently large length.
As already shown in~\Cref{prop:asymptotic_monomial_classes,prop:asymptotic_permutation_semilinear_classes}, this comparison boils down to
investigating the asymptotic behaviour of the quantity~\smash{$p(k,n)=\binom{n}{k}_q/S(n)$} for fixed $q$, $k=k(n)$ a function of the length and $n\rightarrow\infty$.
We find out that, in general, one needs to consider two cases, given by the parity of $n$.
In other words, it is not possible to describe the asymptotic behaviour of $p(k,n)$ with a single function. To overcome this technical issue,
we look at the quantities $p^\textnormal{e}(k,m)=p(k,2m)$ and $p^\textnormal{o}(k,m)=p(k,2m+1)$ as $m\rightarrow\infty$, and we describe their asymptotic behaviour separately.
When $k=k(n)$ satisfies $(\star)$, we can apply the analysis to compute the asymptotic proportion of equivalence classes of codes that have a given dimension, as desired.

For every value of $m$, the sum (over $k$) of the positive numbers $p(k,2m)$ is 1, and the same holds for $p(k,2m+1)$.
This means they can be viewed as a probability distribution over $\Z$ (recall that \smash{$\binom{n}{k}_q=0$} for $k\in\Z\setminus[0,n]$).
Remarkably, and key for the results of this paper, when $k(n)$ is one of the functions in~\Cref{ex:k(n)=n/2-r}, our results show that these distributions have limit the Gaussian $\theta_3$ and $\theta_2$ distributions, respectively.
Since the functions of~\Cref{ex:k(n)=n/2-r} satisfy $(\star)$, this translates into an asymptotic description of the proportion of inequivalent codes that is particularly elegant.
The following lemma completes~\Cref{lem:asymptotic_q_binom_q_binom} and~\Cref{lem:asymptotic_q_binom_power_q} in forming the technical core of the paper.
It describes the fundamental difference between $n$ even and $n$ odd when computing the asymptotic of $S(n)$, by looking at its ratio with the central binomial.
This has two main consequences:
first, it allows to describe the asymptotics of the two subsequences of $p(k,n)$ corresponding to even and odd values of $n$.
Secondly, using this result we are able to describe the exact asymptotic behaviour of $S(n)$, a question left open in~\cite{wild2000asymptotic}.
Upper and lower bounds for $S(n)$ are known. For instance, we have (see~\cite{gadouleau2010packing})
\begin{equation}\label{eq:S(n)_bounds}
    q^{\lfloor n/2\rfloor\lceil n/2\rceil}\leq S(n)<\frac{\theta_3(q^{-1})+1}{K_q}q^{\lfloor n/2\rfloor\lceil n/2\rceil},
\end{equation}
where $K_q=\prod_{j=1}^{\infty}(1-q^{-j})$ already appeared in~\Cref{eq:q_binomial_bounds} and $\theta_3(\cdot)$ is the \textit{Jacobi} $\theta_3$ constant, which we both now define.
The \textbf{Jacobi} $\theta_2$ and $\theta_3$ constants are defined for
$0\leq w<1$
as
\begin{equation}\label{eq:theta_const}
    \theta_2(w)=\sum_{k=-\infty}^\infty w^{(k+1/2)^2}, \quad  \theta_3(w)=\sum_{k=-\infty}^\infty w^{k^2}.  
\end{equation}
In this paper, we are mainly interested in the values taken for $w=q^{-1}$.
We can already see from~\Cref{eq:S(n)_bounds} that the $\theta_3$ constant plays a role in bounding $S(n)$ from above; we will actually show that the $\theta_2$ and $\theta_3$ play a role in determining the exact asymptotic behaviour of $S(n)$.
\begin{lemma}\label{lem:asymptotic_sum_q_binom}
    We have
    \begin{equation*}
        \cfrac{\binom{n}{\lfloor n/2\rfloor}_q}{S(n)}=\begin{cases}
            \cfrac{\binom{2m}{m}_q}{S_q(2m)}\sim\cfrac{1}{\theta_3(q^{-1})}&\mbox{if $n=2m$ and $m \to \infty$},\\
            \cfrac{\binom{2m+1}{m}_q}{S_q(2m+1)}\sim\cfrac{1}{q^{1/4}\theta_2(q^{-1})}&\mbox{if $n=2m+1$ and $m \to \infty$}.
        \end{cases}
    \end{equation*}
\end{lemma}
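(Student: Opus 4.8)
The plan is to split $S(n)$ into the central $q$-binomial plus the contributions of all the off-center ones, and show that in each parity class the ratio of the off-center sum to the central binomial converges to $\theta_3(q^{-1})-1$ (for $n$ even) or $q^{-1/4}\theta_2(q^{-1})$ (for $n$ odd, after absorbing a factor). The natural vehicle for this is \Cref{lem:asymptotic_q_binom_q_binom}, which already tells us that $\binom{n}{k}_q/\binom{n}{\lfloor n/2\rfloor}_q \sim (K_q/K_q(k))\,q^{-(\underm-k)(\overm-k)}$. The idea is that summing this ratio over all $k$ should, in the limit, reproduce the $\theta$-series, since the exponent $-(\underm-k)(\overm-k)$ governs the Gaussian-like decay of the terms as $k$ moves away from the center.

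\medskip

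First I would fix the parity. For $n=2m$ we have $\underm=\overm=m$, so the exponent becomes $-(m-k)^2$; writing $j=m-k$, the ratio for the $k$-th term is asymptotically $q^{-j^2}$ (up to the $K_q/K_q(k)$ correction, which tends to $1$ whenever $k\to\infty$). Summing over all $k\in[0,2m]$ and recalling $\binom{n}{k}_q=\binom{n}{n-k}_q$, the central term contributes $j=0$ and each pair of symmetric off-center terms contributes $2q^{-j^2}$, so formally
\begin{equation*}
    \frac{S(2m)}{\binom{2m}{m}_q}=\sum_{k=0}^{2m}\frac{\binom{2m}{k}_q}{\binom{2m}{m}_q}\longrightarrow \sum_{j=-\infty}^{\infty}q^{-j^2}=\theta_3(q^{-1}).
\end{equation*}
For $n=2m+1$ we have $\underm=m$, $\overm=m+1$, so the exponent $-(\underm-k)(\overm-k)=-(m-k)(m+1-k)=-\bigl((k-m)-\tfrac12\bigr)^2+\tfrac14$; substituting $j=k-m$ and pulling out the $q^{1/4}$ gives terms asymptotic to $q^{1/4}\,q^{-(j-1/2)^2}$, and summing over $j\in\Z$ reproduces $q^{1/4}\theta_2(q^{-1})$, whence the ratio of the central binomial to $S(2m+1)$ is $\sim 1/\bigl(q^{1/4}\theta_2(q^{-1})\bigr)$.

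\medskip

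The hard part will be justifying the interchange of limit and infinite sum, i.e. turning the pointwise convergence of each normalized term into convergence of the whole series. The subtlety is twofold: the correction factor $K_q/K_q(k)$ from \Cref{lem:asymptotic_q_binom_q_binom} degrades precisely for the small-$k$ (equivalently small $j$ relative to $m$) terms, and the number of summands grows with $m$, so one cannot simply pass the limit term-by-term without a domination argument. To handle this I would establish a uniform bound of the form $\binom{n}{k}_q/\binom{n}{\lfloor n/2\rfloor}_q \le C\,q^{-(\underm-k)(\overm-k)}$ valid for all $k$ and all large $n$, using the monotonicity $K_q(k)\ge K_q$ together with the exact product computation already carried out in the proof of \Cref{lem:asymptotic_q_binom_q_binom} (before any asymptotics were taken). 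Since $\sum_j q^{-j^2}$ and $\sum_j q^{-(j-1/2)^2}$ converge, dominated convergence (for sums) then licenses the interchange, and the tail terms beyond any fixed window contribute negligibly uniformly in $n$. Re-indexing carefully so the central binomial lands at $j=0$ (even case) or straddles $j=0,1$ symmetrically (odd case), and checking that the symmetry $\binom{n}{k}_q=\binom{n}{n-k}_q$ correctly pairs the summands to build the two-sided $\theta$-series, completes the argument.
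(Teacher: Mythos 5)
Your proposal is correct and takes essentially the same route as the paper: both normalize $S(n)$ by the central $q$-binomial, use \Cref{lem:asymptotic_q_binom_q_binom} to get the pointwise limits $q^{-r^2}$ (even case) and $q^{-r(r+1)}=q^{1/4}\cdot q^{-(r+1/2)^2}$ (odd case) for each shifted term, and justify the interchange of limit and infinite sum by dominated convergence. The uniform domination you outline (the exact product formula from the proof of \Cref{lem:asymptotic_q_binom_q_binom} plus monotonicity of the correction factor) is precisely how the paper obtains its explicit dominating function $q^{-|r|}$, so there is no substantive difference in the arguments.
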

\begin{proof}
    We first look at the case $n=2m$.
    Define a sequence of functions
\begin{equation*}
    f_m(r)=\cfrac{\binom{2m}{m-r}_q}{\binom{2m}{m}_q}.
\end{equation*}
By \Cref{lem:asymptotic_q_binom_q_binom} we have $\lim_{m\rightarrow\infty}f_m(r)=f(r):=q^{-r^2}$.
Moreover for every $r$ we have $0\leq f_m(r)\leq q^{-|r|}$ and 
\begin{equation}\label{eq:sum_1/q}
    \sum_{r=-\infty}^{+\infty}q^{-|r|}=\frac{3q-1}{q-1}<+\infty,
\end{equation}
where we used $q\geq2$. Thus by the Dominated Convergence Theorem,
\begin{align*}
    \lim_{m\rightarrow\infty}\cfrac{S(2m,q)}{\binom{2m}{m}_q}&=\lim_{m\rightarrow\infty}\sum_{r=-m}^{m}\cfrac{\binom{2m}{m-r}_q}{\binom{2m}{m}_q}=\lim_{m\rightarrow\infty}\sum_{r=-\infty}^{\infty}\cfrac{\binom{2m}{m-r}_q}{\binom{2m}{m}_q}\\
    &=\sum_{r=-\infty}^{\infty}\lim_{m\rightarrow\infty}\cfrac{\binom{2m}{m-r}_q}{\binom{2m}{m}_q}=\sum_{r=-\infty}^{\infty}\cfrac{1}{q^{r^2}}=\theta_3(q^{-1}),
\end{align*}
which is equivalent to our statement.

For $n=2m+1$ the proof is similar: for every $r$ we define a sequence of functions
\begin{equation*}
    f_m(r)=\cfrac{\binom{2m+1}{m-r}_q}{\binom{2m+1}{m}_q}.
\end{equation*}
Then by \Cref{lem:asymptotic_q_binom_q_binom}, $f_m(r)$ converges pointwise to ${f(r)=q^{-r(r+1)}}$ as $m\rightarrow\infty$.
Moreover, for every $r$ we have $0\leq f_m(r)\leq q^{-|r|}$. Again by \Cref{eq:sum_1/q} we have
\begin{align*}
    \lim_{m\rightarrow\infty}\cfrac{S(2m+1,q)}{\binom{2m+1}{m}_q}&=\lim_{m\rightarrow\infty}\sum_{r=-m}^{m+1}\cfrac{\binom{2m+1}{m-r}_q}{\binom{2m+1}{m}_q}=\lim_{m\rightarrow\infty}\sum_{r=-\infty}^{+\infty}\cfrac{\binom{2m+1}{m-r}_q}{\binom{2m+1}{m}_q}\\
    &=\sum_{r=-\infty}^{+\infty}\lim_{m\rightarrow\infty}\cfrac{\binom{2m}{m-r}_q}{\binom{2m}{m}_q}=\sum_{r=-\infty}^{\infty}\cfrac{1}{q^{r(r+1)}}=q^{1/4}\theta_2(q^{-1}).
\end{align*}
Taking reciprocals concludes our proof of the second part of the statement.
\end{proof}

\Cref{lem:asymptotic_sum_q_binom} allows us to isolate, and account for, the difference between $n$ even and $n$ odd when looking at the asymptotic behaviour of $p(k,n)$.
This study is naturally completed using~\Cref{lem:asymptotic_q_binom_q_binom}, as we now illustrate.
\begin{theorem}\label{thm:asymptotic_pe_po}
    We have
    \begin{equation*}
        p^\textnormal{e}(k,m)\sim\cfrac{K_q}{K_q(k)\theta_3(q^{-1})}q^{-(m-k)^2}, \quad p^\textnormal{o}(k,m)\sim\cfrac{K_q}{K_q(k)\theta_2(q^{-1})}q^{-(m-k+1/2)^2} \quad \mbox{as $m\rightarrow\infty$}.
    \end{equation*}
\end{theorem}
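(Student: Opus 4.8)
The plan is to factor each of $p^{\textnormal{e}}(k,m)$ and $p^{\textnormal{o}}(k,m)$ through the central $q$-binomial coefficient, reducing the statement to the two lemmas already in hand. Recalling that $p^{\textnormal{e}}(k,m)=\binom{2m}{k}_q/S(2m)$ and $p^{\textnormal{o}}(k,m)=\binom{2m+1}{k}_q/S(2m+1)$, I would insert the appropriate central binomial in each case, writing
\[
p^{\textnormal{e}}(k,m)=\frac{\binom{2m}{k}_q}{\binom{2m}{m}_q}\cdot\frac{\binom{2m}{m}_q}{S(2m)},\qquad
p^{\textnormal{o}}(k,m)=\frac{\binom{2m+1}{k}_q}{\binom{2m+1}{m}_q}\cdot\frac{\binom{2m+1}{m}_q}{S(2m+1)}.
\]
Both factors on each right-hand side are controlled by earlier results, so the proof is essentially a synthesis of \Cref{lem:asymptotic_q_binom_q_binom} and \Cref{lem:asymptotic_sum_q_binom}.

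For the even case I would apply \Cref{lem:asymptotic_q_binom_q_binom} with $n=2m$, where $\lfloor n/2\rfloor=\lceil n/2\rceil=m$, giving $\binom{2m}{k}_q/\binom{2m}{m}_q\sim\frac{K_q}{K_q(k)}\,q^{-(m-k)^2}$, together with the even part of \Cref{lem:asymptotic_sum_q_binom}, which yields $\binom{2m}{m}_q/S(2m)\sim 1/\theta_3(q^{-1})$. Since all quantities are positive, multiplying these two asymptotic equivalences is legitimate and produces the first claim directly, with no leftover exponent to simplify.

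For the odd case, $n=2m+1$ has $\lfloor n/2\rfloor=m$ and $\lceil n/2\rceil=m+1$, so \Cref{lem:asymptotic_q_binom_q_binom} gives $\binom{2m+1}{k}_q/\binom{2m+1}{m}_q\sim\frac{K_q}{K_q(k)}\,q^{-(m-k)(m+1-k)}$, while the odd part of \Cref{lem:asymptotic_sum_q_binom} gives $\binom{2m+1}{m}_q/S(2m+1)\sim 1/\bigl(q^{1/4}\theta_2(q^{-1})\bigr)$. The only step beyond bookkeeping is combining the exponents: setting $x=m-k$ and completing the square,
\[
-(m-k)(m+1-k)-\tfrac14=-x^2-x-\tfrac14=-\bigl(x+\tfrac12\bigr)^2=-\bigl(m-k+\tfrac12\bigr)^2,
\]
so that $q^{-(m-k)(m+1-k)}\cdot q^{-1/4}=q^{-(m-k+1/2)^2}$, which is exactly the stated form.

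I do not anticipate a genuine obstacle, as the analytic content is entirely front-loaded into the two lemmas; the present argument is a clean multiplicative assembly. The one point deserving slight care is the completing-the-square identity in the odd case, which produces the half-integer shift $m-k+\tfrac12$ and thereby accounts for the appearance of $\theta_2$ in place of $\theta_3$. I would also remark that $k$ may be allowed to vary with $m$ (the factor $K_q(k)$ is retained unsimplified on the right-hand side precisely for this reason), but this causes no difficulty since both invoked lemmas are already stated for $k=k(n)$ a function of the length.
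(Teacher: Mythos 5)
Your proof is correct and follows the same route as the paper's: factor $p^{\textnormal{e}}$ and $p^{\textnormal{o}}$ through the central $q$-binomial and multiply the asymptotics from \Cref{lem:asymptotic_q_binom_q_binom} and \Cref{lem:asymptotic_sum_q_binom}. In fact you are more explicit than the paper, which only displays the even case and leaves the odd case (including the completing-the-square step absorbing the $q^{1/4}$) to the reader.
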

\begin{proof}
    We have
    \begin{equation*}
        p^\textnormal{e}(k,m)=p(k,2m)=\cfrac{\binom{2m}{m}_q}{S_q(2m)}\cfrac{\binom{2m}{k}_q}{\binom{2m}{m}_q}\sim\cfrac{1}{\theta_3(q^{-1})}\cfrac{K_q}{K_q(k)}q^{-(m-k)^2}
    \end{equation*}
    by \Cref{lem:asymptotic_q_binom_q_binom} and \Cref{lem:asymptotic_sum_q_binom}. The proof for $p^\textnormal{o}(k,m)$ follows the same steps.
\end{proof}

When $k=k(n)$ satisfies $(\star)$, our results allow for the description of the asymptotic proportion of inequivalent codes of dimension $k$, which is one of the centerpieces of this paper.

\begin{corollary}\label{corol:asymptotic_fraction_eq_classes}
    Assume that $k=k(n)$ satisfies $(\star)$.
    For $n=2m$, $m\rightarrow\infty$, we have
    \begin{equation*}
        \frac{\Num^\mathfrak{S}_{k(2m),2m}}{\Num^\mathfrak{S}_{2m}}\sim\frac{\Num^\M_{k(2m),2m}}{\Num^\M_{2m}}\sim\frac{\Num^\Gamma_{k(2m),2m}}{\Num^\Gamma_{2m}}\sim\cfrac{1}{\theta_3(q^{-1})}q^{-(m-k(2m))^2}. 
    \end{equation*}
    For $n=2m+1$, $m\rightarrow\infty$, we have 
    \begin{equation*}
        \frac{\Num^\mathfrak{S}_{k(2m+1),2m+1}}{\Num^\mathfrak{S}_{2m+1}}\sim\frac{\Num^\M_{k(2m+1),2m+1}}{\Num^\M_{2m+1}}\sim\frac{\Num^\Gamma_{k(2m+1),2m+1}}{\Num^\Gamma_{2m+1}}\sim\cfrac{1}{\theta_2(q^{-1})}q^{-(m-k(2m+1)+1/2)^2}. 
    \end{equation*}
\end{corollary}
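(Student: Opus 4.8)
The plan is to combine the three fractional asymptotics established in~\Cref{prop:asymptotic_monomial_classes,prop:asymptotic_permutation_semilinear_classes} with the refined estimate for $p(k,n)$ proved in~\Cref{thm:asymptotic_pe_po}. The crucial point is that all three quantities
\begin{equation*}
\frac{\Num^\mathfrak{S}_{k(n),n}}{\Num^\mathfrak{S}_n},\qquad \frac{\Num^\M_{k(n),n}}{\Num^\M_n},\qquad \frac{\Num^\Gamma_{k(n),n}}{\Num^\Gamma_n}
\end{equation*}
have already been shown to be asymptotically equivalent to the \emph{single} function $p(k(n),n)=\binom{n}{k(n)}_q/S(n)$. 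Thus the corollary reduces to estimating $p(k(n),n)$ along the even and odd subsequences, after which all three ratios inherit the same asymptotics simultaneously.

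First I would split according to the parity of $n$. For $n=2m$ one has $p(k(2m),2m)=p^{\textnormal{e}}(k(2m),m)$, and for $n=2m+1$ one has $p(k(2m+1),2m+1)=p^{\textnormal{o}}(k(2m+1),m)$. Applying~\Cref{thm:asymptotic_pe_po} directly yields
\begin{equation*}
p^{\textnormal{e}}(k(2m),m)\sim\frac{K_q}{K_q(k(2m))\,\theta_3(q^{-1})}\,q^{-(m-k(2m))^2},\qquad p^{\textnormal{o}}(k(2m+1),m)\sim\frac{K_q}{K_q(k(2m+1))\,\theta_2(q^{-1})}\,q^{-(m-k(2m+1)+1/2)^2}.
\end{equation*}

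The only genuine subtlety, and hence the hard part, is to remove the prefactor $K_q/K_q(k(n))$. For this I would invoke that any $k(n)$ satisfying $(\star)$ necessarily diverges: writing $k(n)=n/2-\delta(n)$, condition $(\star)$ forces $\delta(n)^2$ to remain below $\varepsilon n-A\sqrt n$ minus a quantity tending to $+\infty$, so $\delta(n)=\smallo(\sqrt n)$ and therefore $k(n)\to+\infty$ (this is exactly the observation already used in the proof of~\Cref{thm:asymptotic_Nk}). Consequently $K_q(k(n))=\prod_{j=1}^{k(n)}(1-q^{-j})\to K_q$, so $K_q/K_q(k(n))\to 1$ along both subsequences, and the prefactor collapses to $1$. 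Substituting this back into the two displays above, and transporting the conclusion through the chain of $\sim$-equivalences supplied by~\Cref{prop:asymptotic_monomial_classes,prop:asymptotic_permutation_semilinear_classes}, gives the stated asymptotics for all three notions of equivalence at once, completing the proof.
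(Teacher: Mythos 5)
Your proposal is correct and follows essentially the same route as the paper: reduce all three ratios to $p(k(n),n)$ via \Cref{prop:asymptotic_monomial_classes,prop:asymptotic_permutation_semilinear_classes}, apply \Cref{thm:asymptotic_pe_po} along the even and odd subsequences, and kill the prefactor by noting that $(\star)$ forces $k(n)\rightarrow+\infty$, hence $K_q/K_q(k(n))\rightarrow1$. One minor slip that does not affect the argument: writing $k(n)=n/2-\delta(n)$, condition $(\star)$ only yields $\delta(n)^2\leq\varepsilon n-A\sqrt{n}-\omega(n)$ with $\omega(n)\rightarrow+\infty$, so $\delta(n)\in\bigO(\sqrt{n})$ rather than $\smallo(\sqrt{n})$ (e.g.\ $\delta(n)=\tfrac{1}{2}\sqrt{\varepsilon n}$ is admissible), but this weaker bound already gives $k(n)=n/2-\bigO(\sqrt{n})\rightarrow+\infty$, which is all that is needed.
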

\begin{proof}
    Apply~\Cref{thm:asymptotic_pe_po}, noticing that if $k(n)$ satisfies $(\star)$,  then $\lim_{n\rightarrow\infty}k(n)=+\infty$ and therefore $K_q/K_q(k(n))\sim1$.
\end{proof}

Of particular interest in this paper
is the case $k(n)=\lfloor n/2\rfloor-r$ for some fixed $r\in\N$.
For every $n$ and $q$, we extend the definition of $p(k,n)$ to every $k\in\Z$ by setting $p(k,n)=0$ whenever $k\notin[0,n]$.
Since $0\leq p(k,n)\leq1$ for every $k\in\Z$ and $\sum_{k\in\Z}p(k,n)=1$, 
the $p(k,n)$'s define a discrete probability distribution over $\Z$ via $\Prob(k)=p(k,n)$.
We then consider the following shifts of the distributions:
\begin{enumerate}
    \item if $n=2m$, define a distribution on $\Z$ by $\Prob^{\textnormal{e}}_m(r)=p_q^{\textnormal{e}}(m-r,m)$;
    \item if $n=2m+1$ is odd, define a distribution on $1/2+\Z$ by $\Prob^{\textnormal{o}}_m(r)=p_q^{\textnormal{o}}(m-r+1/2,m)$.
\end{enumerate}

The two shifted distributions are symmetric with respect to 0, meaning $\Prob^{\textnormal{e}}_m(r)=\Prob^{\textnormal{e}}_m(-r)$ and $\Prob^{\textnormal{o}}_m(r)=\Prob^{\textnormal{o}}_m(-r)$ for every $r$ and $m$.
Informally, one can see $r$ as a measure of the distance from the centre of the distribution.
One of the main findings of this paper is that, as $m\rightarrow\infty$, the two distributions converge pointwise to the \textit{discrete Gaussian} $\theta_3$ and $\theta_2$ distributions with \textit{nome} $1/q$, studied in \cite{salminen2024probabilistic} in connection to the Brownian motion.
\begin{definition}\label{def_discr_gaussian}
    Let $w\in\R$, $0<w<1$, be a constant.
    The \textbf{discrete Gaussian $\theta_2$-distribution} is defined by the density 
    \begin{equation*}
        \Prob_{\theta_2}(k)=\cfrac{w^{k^2}}{\theta_2(w)},\quad k\in\frac{1}{2}+\Z,
    \end{equation*}
    whereas the \textbf{discrete Gaussian $\theta_3$-distribution} is defined by the density
    \begin{equation*}
        \Prob_{\theta_3}(k)=\cfrac{w^{k^2}}{\theta_3(w)}, \quad k\in\Z.
    \end{equation*}
    The quantity $w$ is called the \textbf{nome} of the distributions.
\end{definition}

It was shown in~\cite{kemp1997characterizations} that the Gaussian $\theta_3$ distribution is the \textit{maximum entropy distribution} on $\Z$ having a  specified mean and variance.
This property qualifies the distribution as a discrete counterpart of the Gaussian distribution, which has the same characterisation over~$\R$.
\begin{remark}
    In \cite{salminen2024probabilistic}, the $\theta_2$ Gaussian is defined to take values in $\Z$ instead of $1/2+\Z$. We shift the domain to have a distribution that is symmetric around 0.
    This also has the advantage of yielding a more concise formulation for the exponents of the nome.
\end{remark}
We are interested in the case where the nome is $w=1/q$.
The following corollary spells out the covergence of the distributions $\Prob^{\textnormal{e}}$ and $\Prob^{\textnormal{o}}$.
It is a straightforward consequence of our previous results, but nonetheless one of the most relevant findings of this work.
\begin{corollary}\label{corol:limit_distr}
    As $m\rightarrow\infty$ we have the following convergences in distribution:
    \begin{equation*}
        \Prob^{\textnormal{e}}_m\rightarrow\Prob_{\theta_3}\quad\textnormal{and}\quad\Prob^{\textnormal{o}}_m\rightarrow\Prob_{\theta_2},
    \end{equation*}
    where the nome of the limit distirbutions is $1/q$.
\end{corollary}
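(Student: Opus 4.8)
The plan is to establish convergence in distribution by proving pointwise convergence of the densities, which for distributions on a fixed discrete lattice is equivalent to convergence in distribution. The entire content of the corollary is essentially a re-reading of \Cref{thm:asymptotic_pe_po} through the lens of \Cref{def_discr_gaussian}, so the proof should be short and the main work is bookkeeping with the shifts. First I would fix an arbitrary point of the lattice: an integer $r\in\Z$ for the even case, and a half-integer of the form $r\in 1/2+\Z$ for the odd case, and compute the limit of $\Prob^{\textnormal{e}}_m(r)$ and $\Prob^{\textnormal{o}}_m(r)$ respectively.

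For the even case, by definition $\Prob^{\textnormal{e}}_m(r)=p^{\textnormal{e}}(m-r,m)$, so I substitute $k=m-r$ into the asymptotic expression from \Cref{thm:asymptotic_pe_po}. Since $k(n)=m-r\to\infty$ as $m\to\infty$ for fixed $r$, we have $K_q/K_q(k)\to 1$, and the exponent becomes $-(m-k)^2=-r^2$. Hence
\begin{equation*}
    \Prob^{\textnormal{e}}_m(r)\sim\cfrac{1}{\theta_3(q^{-1})}q^{-r^2}=\Prob_{\theta_3}(r),
\end{equation*}
which is exactly the density of the discrete Gaussian $\theta_3$-distribution with nome $1/q$ evaluated at $r$. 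For the odd case, $\Prob^{\textnormal{o}}_m(r)=p^{\textnormal{o}}(m-r+1/2,m)$; substituting $k=m-r+1/2$ into the second estimate of \Cref{thm:asymptotic_pe_po} yields exponent $-(m-k+1/2)^2=-r^2$ (since $m-k+1/2=r$), and again $K_q/K_q(k)\to 1$, so $\Prob^{\textnormal{o}}_m(r)\sim\theta_2(q^{-1})^{-1}q^{-r^2}=\Prob_{\theta_2}(r)$.

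The only genuine subtlety is to upgrade pointwise density convergence to convergence in distribution, and to make sure the shifted points land on the correct lattice so that $K_q(k)$ is evaluated at an integer as required by \Cref{lem:asymptotic_q_binom_q_binom}. In the even case $k=m-r\in\Z$ automatically; in the odd case one should note that $\binom{2m+1}{k}_q$ and the corresponding truncation $K_q(k)$ are indexed by the integer $m-r$, and the half-integer $r$ enters only through the symmetric shift, so the estimate of \Cref{thm:asymptotic_pe_po} applies verbatim. Since both limit objects are genuine probability densities on $\Z$ and $1/2+\Z$ respectively (their masses sum to $1$ by the definition of $\theta_3$ and $\theta_2$ in \Cref{eq:theta_const}), pointwise convergence of the densities on a fixed countable lattice implies convergence in distribution by Scheff\'e's lemma, or directly because the total mass is preserved in the limit. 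I expect no real obstacle here; the main thing to get right is the alignment of the shift constants so that the nome of the limit is exactly $1/q$, as claimed.
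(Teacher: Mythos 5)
Your proposal is correct and follows essentially the same route as the paper: fix a lattice point, substitute the shifted index into \Cref{thm:asymptotic_pe_po}, observe that $k\to\infty$ forces $K_q/K_q(k)\to1$, and upgrade pointwise convergence of the densities to convergence in distribution (the paper cites the standard characterization for discrete laws where you invoke Scheff\'e; both are fine). One trivial slip in your prose: in the odd case the integer index is $m-r+1/2$ (not $m-r$, which is a half-integer when $r\in1/2+\Z$), exactly as in your own substitution $k=m-r+1/2$.
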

\begin{proof}
    For every fixed $r\in\Z$, by~\Cref{thm:asymptotic_pe_po} and~\Cref{corol:asymptotic_fraction_eq_classes}, we have
    \begin{equation*}
        \Prob^{\textnormal{e}}_m(r)=p_m^{\textnormal{e}}(m-r,m)\sim \cfrac{q^{-r^2}}{\theta_3(1/q)}=\Prob_{\theta_3}(r),
    \end{equation*}
    and the result follows from the characterization of convergence in distribution in terms of pointwise convergence; see for instance~\cite{jacod2004probability}.
    The proof for $\Prob^{\textnormal{o}}_m$ is analogous.
\end{proof}
Stochastic characterisations for Gaussian $\theta_2$ and $\theta_3$ distributed random variables were proposed in \cite{salminen2024probabilistic}.
These descriptions are based on infinite product representations of the~$\theta_2$ and~$\theta_3$ Jacobi theta functions, and involve the sum of infinitely many Bernoulli random variables with different distributions.
A different characterisation for the Gaussian $\theta_3$ distribution as the difference of Heine distributions was proposed in~\cite{kemp1997characterizations}.
\Cref{corol:limit_distr} offers an alternative result in this sense: as $m$ grows, the distributions $\Prob^{\textnormal{e}}_m(r)$ (resp. $\Prob^{\textnormal{o}}_m(r)$) become increasingly good approximations of the Gaussian $\theta_3$ (resp. $\theta_2$), providing also an effective way to compute the values of the probabilities.

\paragraph{Answering an open question from~\cite{wild2000asymptotic}.}
Our results also allow for the description of the asymptotic behaviour of $S(n)$.
It was shown in \cite[Lemma 1]{wild2000asymptotic} that for every $q$ there exist constants $d_1=d_1(q)$ and $d_2=d_2(q)$ such that
\begin{equation*}
    S(2m+1)\sim d_1q^{(2m+1)^2/4},\quad
     S(2m)\sim d_2q^{(2m)^2/4} \quad \mbox{as $m \to \infty$}.
\end{equation*}
In the same work, it is shown that $d_1<1\leq d_2$ for $q\geq 49$, and that $d_1<d_2$ for all $q<49$, implying that the two values never coincide. Yet, the two numbers are not computed explicitly.
 From~\Cref{eq:S(n)_bounds} it is evident that $d_2\leq\frac{\theta_3(1/q)+1}{K_q}$.
Furthermore, our results imply closed formulas for the constants $d_1$ and $d_2$ as in the following result.
\begin{corollary}
    We have
    \begin{equation*}
        S(2m)\sim\frac{\theta_3(1/q)}{K_q}q^{m^2}\quad\textnormal{and}\quad S(2m+1)\sim\frac{\theta_2(1/q)}{K_q}q^{(m+1/2)^2}
    \end{equation*}
\end{corollary}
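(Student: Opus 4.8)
The plan is to express $S(n)$ as the product of two ratios that have both already been computed, thereby reducing the final corollary to a routine combination of \Cref{lem:asymptotic_q_binom_power_q} and \Cref{lem:asymptotic_sum_q_binom}. The key observation is that for any $n$ we may write
\begin{equation*}
    S(n) = \cfrac{S(n)}{\binom{n}{\lfloor n/2\rfloor}_q}\cdot\cfrac{\binom{n}{\lfloor n/2\rfloor}_q}{q^{\lfloor n/2\rfloor\lceil n/2\rceil}}\cdot q^{\lfloor n/2\rfloor\lceil n/2\rceil},
\end{equation*}
so that the asymptotics of $S(n)$ follow immediately once the asymptotics of each of the first two factors are known. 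The second factor is handled by \Cref{lem:asymptotic_q_binom_power_q}, which gives that it is asymptotic to $1/K_q$ as $n\to\infty$. The first factor is exactly the reciprocal of the quantity computed in \Cref{lem:asymptotic_sum_q_binom}, which must be split according to the parity of $n$.

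First I would treat the even case $n=2m$. Here $\lfloor n/2\rfloor=\lceil n/2\rceil=m$, so $q^{\lfloor n/2\rfloor\lceil n/2\rceil}=q^{m^2}$. By \Cref{lem:asymptotic_sum_q_binom} the first factor $S(2m)/\binom{2m}{m}_q$ is asymptotic to $\theta_3(q^{-1})$, and by \Cref{lem:asymptotic_q_binom_power_q} the second factor is asymptotic to $1/K_q$. Multiplying the three pieces yields $S(2m)\sim \frac{\theta_3(1/q)}{K_q}q^{m^2}$, as claimed.

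Next I would treat the odd case $n=2m+1$. Now $\lfloor n/2\rfloor=m$ and $\lceil n/2\rceil=m+1$, so the exponent becomes $\lfloor n/2\rfloor\lceil n/2\rceil=m(m+1)$; I would note that this equals $(m+1/2)^2-1/4$, which is exactly what produces the stated exponent $q^{(m+1/2)^2}$ together with a compensating factor $q^{-1/4}$. By \Cref{lem:asymptotic_sum_q_binom}, the first factor $S(2m+1)/\binom{2m+1}{m}_q$ is asymptotic to $q^{1/4}\theta_2(q^{-1})$, while the second factor is again asymptotic to $1/K_q$ by \Cref{lem:asymptotic_q_binom_power_q} (whose statement is phrased for the central $q$-binomial $\binom{n}{\lfloor n/2\rfloor}_q$, so it applies verbatim with $n=2m+1$). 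Combining the three factors gives
\begin{equation*}
    S(2m+1)\sim q^{1/4}\theta_2(1/q)\cdot\cfrac{1}{K_q}\cdot q^{m(m+1)}=\cfrac{\theta_2(1/q)}{K_q}\,q^{m(m+1)+1/4}=\cfrac{\theta_2(1/q)}{K_q}\,q^{(m+1/2)^2}.
\end{equation*}
There is no genuine obstacle here: every ingredient is already established, and the only point demanding care is the bookkeeping of the exponent in the odd case, where one must verify that $m(m+1)+1/4=(m+1/2)^2$ so that the stray $q^{1/4}$ from \Cref{lem:asymptotic_sum_q_binom} is absorbed cleanly into the final power. I would also remark that this identifies the Wild constants as $d_2=\theta_3(1/q)/K_q$ and $d_1=\theta_2(1/q)/K_q$, which is the content of answering the open question.
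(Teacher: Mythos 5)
Your proposal is correct and is essentially the paper's own proof: the paper's argument is precisely to combine \Cref{lem:asymptotic_sum_q_binom} and \Cref{lem:asymptotic_q_binom_power_q}, and your factorization of $S(n)$ through the central $q$-binomial, together with the exponent bookkeeping $m(m+1)+1/4=(m+1/2)^2$ absorbing the stray $q^{1/4}$ in the odd case, is exactly the computation that combination entails.
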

\begin{proof}
    Simply combine~\Cref{lem:asymptotic_sum_q_binom} and~\Cref{lem:asymptotic_q_binom_power_q}.
\end{proof}
Therefore, in the notation of~\cite{wild2000asymptotic}, we have \smash{$d_1=\frac{\theta_2(1/q)}{K_q}$} and \smash{$d_2=\frac{\theta_3(1/q)}{K_q}$}.

\end{sloppypar}

\bibliographystyle{abbrv}
\bibliography{refs}

\begin{thebibliography}{10}

\bibitem{betten2006error}
A.~Betten, M.~Braun, H.~Fripertinger, A.~Kerber, A.~Kohnert, and A.~Wassermann.
\newblock {\em {Error-correcting linear codes: Classification by isometry and applications}}.
\newblock Springer, 2006.

\bibitem{bonin1993automorphism}
J.~E. Bonin.
\newblock {Automorphism groups of higher-weight Dowling geometries}.
\newblock {\em Journal of Combinatorial Theory, Series B}, 58(2):161--173, 1993.

\bibitem{bonin1993modular}
J.~E. Bonin.
\newblock {Modular elements of higher-weight Dowling lattices}.
\newblock {\em Discrete mathematics}, 119(1-3):3--11, 1993.

\bibitem{de2014asymptotic}
N.~G. De~Bruijn.
\newblock {\em {Asymptotic methods in analysis}}.
\newblock Courier Corporation, 2014.

\bibitem{dowling1971codes}
T.~A. Dowling.
\newblock {Codes, Packings and the Critical Problem.}
\newblock Technical report, 1971.

\bibitem{gadouleau2008decoder}
M.~Gadouleau and Z.~Yan.
\newblock {On the decoder error probability of bounded rank-distance decoders for maximum rank distance codes}.
\newblock {\em IEEE Transactions on Information Theory}, 54(7):3202--3206, 2008.

\bibitem{gadouleau2010packing}
M.~Gadouleau and Z.~Yan.
\newblock Packing and covering properties of subspace codes for error control in random linear network coding.
\newblock {\em IEEE Transactions on Information Theory}, 56(5):2097--2108, 2010.

\bibitem{gruica2022common}
A.~Gruica and A.~Ravagnani.
\newblock {Common complements of linear subspaces and the sparseness of MRD codes}.
\newblock {\em SIAM Journal on Applied Algebra and Geometry}, 6(2):79--110, 2022.

\bibitem{hou2005asymptotic}
X.-D. Hou.
\newblock On the asymptotic number of non-equivalent q-ary linear codes.
\newblock {\em Journal of Combinatorial Theory, Series A}, 112(2):337--346, 2005.

\bibitem{hou2007asymptotic}
X.-D. Hou.
\newblock On the asymptotic number of non-equivalent binary linear codes.
\newblock {\em Finite Fields and Their Applications}, 13(2):318--326, 2007.

\bibitem{hou2009asymptotic}
X.-D. Hou.
\newblock Asymptotic numbers of non-equivalent codes in three notions of equivalence.
\newblock {\em Linear and Multilinear Algebra}, 57(2):111--122, 2009.

\bibitem{huffman2010fundamentals}
W.~C. Huffman and V.~Pless.
\newblock {\em Fundamentals of error-correcting codes}.
\newblock Cambridge university press, 2010.

\bibitem{jacod2004probability}
J.~Jacod and P.~Protter.
\newblock {\em Probability essentials}.
\newblock Springer Science \& Business Media, 2004.

\bibitem{kemp1997characterizations}
A.~W. Kemp.
\newblock Characterizations of a discrete normal distribution.
\newblock {\em Journal of Statistical Planning and Inference}, 63(2):223--229, 1997.

\bibitem{lax2004character}
R.~F. Lax.
\newblock {On the character of $S_n$ acting on subspaces of $\F_q^n$}.
\newblock {\em Finite Fields and Their Applications}, 10(3):315--322, 2004.

\bibitem{ravagnani2022whitney}
A.~Ravagnani.
\newblock {Whitney numbers of combinatorial geometries and higher-weight Dowling lattices}.
\newblock {\em SIAM Journal on Applied Algebra and Geometry}, 6(2):156--189, 2022.

\bibitem{salminen2024probabilistic}
P.~Salminen and C.~Vignat.
\newblock {Probabilistic aspects of Jacobi theta functions}.
\newblock {\em Mathematica Scandinavica}, 130(3), 2024.

\bibitem{wild2000asymptotic}
M.~Wild.
\newblock The asymptotic number of inequivalent binary codes and nonisomorphic binary matroids.
\newblock {\em Finite Fields and their Applications}, 6(2):192--202, 2000.

\end{thebibliography}

\end{document}